\def\C{ \mathcal{C} }
\def\E{ {\mathcal E} }
\def\M{ {\mathcal M} }
\def\U{ {\mathcal U} }
\def\R{ {\mathcal R} }
\def\N{ {\mathcal N} }
\def\O{ {\mathcal O} }
\def\D{ {\mathcal D} }
\newcommand{\ptm}[1]{\mathbf{#1}}
\def\Tr{{\rm{tr}}}
\def\tr{ \mbox{tr} }
\def\>{\rangle}
\def\<{\langle}
\def\hc{^{\dagger}}
\def\thv{\boldsymbol{\theta}}
\def\omv{\boldsymbol{\omega}}
\renewcommand{\emph}{\textit}
\newtheorem{theorem}{Theorem}
\newtheorem{lemma}[theorem]{Lemma}
\newtheorem{corollary}[theorem]{Corollary}
\begin{document}
\title{Classical simulations of noisy variational quantum circuits}
\author{Enrico Fontana$^{1,2,3}$}
\email{enrico.fontana@strath.ac.uk}
\author{Manuel S. Rudolph$^4$}
\author{Ross Duncan$^2$}
\author{Ivan Rungger$^3$}
\author{Cristina C\^{i}rstoiu$^{2}$}

\affiliation{$^1$Department of Computer and Information Sciences, University of Strathclyde, 26 Richmond Street, Glasgow G1 1XH, UK}
\affiliation{$^2$Quantinuum, Terrington House, 13-15 Hills Road, Cambridge CB2 1NL, UK}
\affiliation{$^3$National Physical Laboratory, Hampton Road, Teddington TW11 0LW, UK}
\affiliation{$^4$Institute of Physics, Ecole Polytechnique Fédérale de Lausanne (EPFL), CH-1015 Lausanne, Switzerland}

\begin{abstract}
Noise detrimentally affects quantum computations so that they not only become less accurate but also easier to simulate classically as systems scale up.  We construct a classical simulation algorithm, \textsc{lowesa} (low weight efficient simulation algorithm), for estimating expectation values of noisy parameterised quantum circuits. It combines previous results on spectral analysis of parameterised circuits with Pauli back-propagation and recent ideas for simulations of noisy random circuits. We show, under some conditions on the circuits and mild assumptions on the noise, that \textsc{lowesa} gives an efficient, polynomial algorithm in the number of qubits (and depth), with approximation error that vanishes exponentially in the physical error rate and a controllable cut-off parameter. We also discuss the practical limitations of the method for circuit classes with correlated parameters and its scaling with decreasing error rates.
\end{abstract}

\maketitle

\section{Introduction}

Quantum hardware has rapidly progressed to enable experiments that reach the barrier where computations become increasingly challenging to simulate with (high performance) classical computing systems~\cite{arute2019quantum, wu2021strong, zhong2020quantum, zhu2022quantum, moses2023race}.  Quantum advantage demonstrations recently stimulated substantial advances on classical algorithms for random circuit sampling~\cite{gao2018efficient, aharonov2022polynomial}, particularly on approximate tensor networks~\cite{xu2023herculean, huang2020classical-sup, villalonga2019flexible}.

In the current quest for applications with suitable implementations on noisy quantum hardware~\cite{preskill2018quantum}, much recent attention has been devoted to parameterised quantum circuits (PQCs). Variational quantum algorithms (VQAs) combine such a controllable quantum routine with classical optimisation to minimise a cost function that encodes the problem of interest. While they are often considered to have an intrinsic noise-resilience, recent studies have shown that accumulation of errors~\cite{gonzalez2022error} leads to phenomena like noise-induced barren plateaus~\cite{wang2021noise}, which hinder and potentially prohibit optimisation~\cite{anschuetz2022quantum}. Furthermore, frameworks~\cite{stilck2021limitations, de2023limitations} comparing classical algorithms with noisy VQAs by use of entropic quantities concluded that the circuit depth must have a bound that scales inversely with the physical gate error rate. Beyond this regime, classical methods certifiably outperform the noisy quantum computation~\cite{francca2022game}. Other classical simulations that target noisy VQAs include decision diagrams~\cite{huang2021logical} and tensor networks~\cite{ayral2023density, zhou2020limits}.  However, these approaches tend to be heuristic and do not necessarily provide rigorous trade-offs between complexity, approximation error and physical noise. 

Here we present \textsc{lowesa}, an efficient classical algorithm for simulating expectation values of parameterised quantum circuits affected by Pauli noise.
We combine ideas from simulating noisy random circuit sampling~\cite{aharonov2022polynomial} with spectral decompositions of parameterised noisy quantum circuits~\cite{fontana2022spectral}. Several applications, particularly in quantum machine learning~\cite{schuld2021effect, schreiber2022classical} have used the fact that cost functions for VQAs decompose into (finite) Fourier series in the variational parameters. Our previous work~\cite{fontana2022spectral, cirstoiu2017global} also shows that the effect of noise on the circuit produces Fourier coefficients that are contracted by a factor that decays exponentially with the Hamming weight of the frequency vector $\omv$. The algorithm we propose here produces an approximation to the noisy cost function that consists only of those Fourier modes with frequencies below a fixed cut-off value $|\omv|\leq \ell$. We show the time complexity of \textsc{lowesa} is $O(n^2m 2^\ell)$ for a \emph{specific class} of circuits on $n$ qubits consisting only of $m$ independently parameterised non-Clifford gates and any number of Clifford gates. The approximation error is shown to decay exponentially with $\ell$ and the physical gate error rate $p$, under mild assumptions on the noise. Equivalently, our algorithm takes $O(n^2 m (\frac{1}{\epsilon})^{1/p}))$ time to produce a function that approximates the noisy cost function within a fixed additive error $\epsilon$ (averaged over the entire parameter space). Notably, in the noise-less setting, simulating a circuit with $m$ non-Clifford gates, as considered here, require $O(\exp(m))$ \cite{bravyi2019simulation}. Improved sub-exponential algorithms for estimation of expectation values assume constant depth and planar architectures~\cite{bravyi2021classical}. By contrast, not only can we get a linear scaling in $m$ in the noisy setting but also, we recover an approximation of the entire cost function landscape rather than a single observable expectation value for a circuit with fixed parameters.    

It is important to emphasize that while \textsc{lowesa} is asympotically efficient in qubit number, the exponent scaling with $1/p$ can limit the practical runtime.  We leave for further research to investigate the extent to which the algorithm is computationally tractable for finite system sizes and low error rates of $p\approx 10^{-2} - 10^{-3}$ attained by current devices \cite{moses2023race}. On the other hand, approximate MPS-based tensor network simulation methods~\cite{ayral2023density} can also deal with (shallow) noisy circuits of large sizes up to hundreds of qubits, with approximation error that increases significantly with the gate fidelity. However, the complexity has exponential scaling with depth and more intricate circuit topologies beyond 1D. Furthermore, the relation between tensor truncation error and noise becomes difficult to quantify mathematically. It remains an interesting open question if our algorithm can be combined with such tensor network methods to improve the systems sizes accessible via (noisy) classical simulations.

The classical simulation approach presented here does not rely on the specific classical optimisation loop in VQAs, and therefore can be adapted to any algorithm that involves a class of noisy circuits with a fixed structure and independently parameterised gates. For example, certain noisy implementations of quantum signal processing~\cite{martyn2021grand} might fall under this.

Finally, our algorithm quantitatively reinforces the idea that gate fidelities of quantum devices need to decrease in order to access regimes beyond classical simulation methods~\cite{zhou2020limits}. Increasing the number of qubits for fixed error rates is unlikely to be sufficient as several noisy classical simulation algorithms exhibit polynomial scaling with qubit number, a recurrent feature that also appears in the case of noisy random circuit sampling~\cite{aharonov2022polynomial, gao2018efficient} and tensor network methods~\cite{ayral2023density}.  

\begin{figure*}
    \centering
    \includegraphics[width=0.9\textwidth]{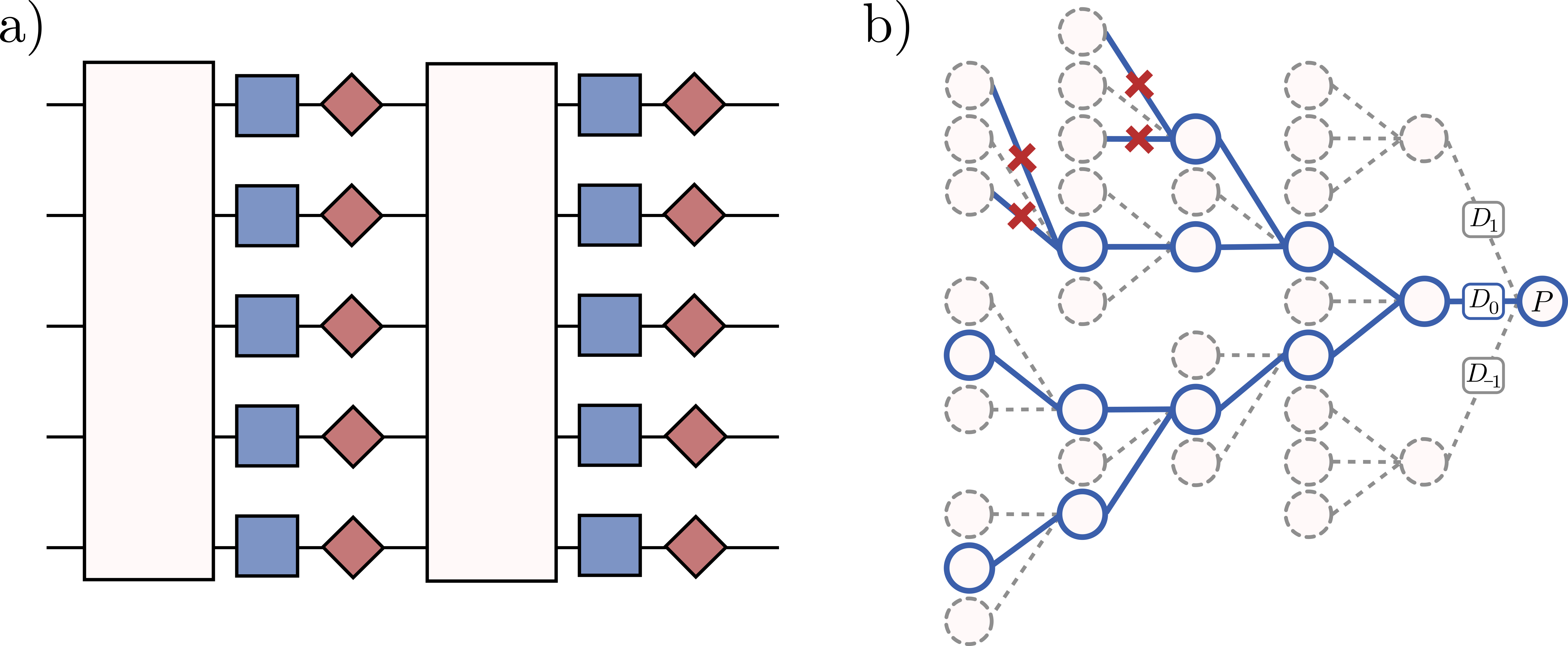}
    \caption{a) Schematic of the parameterised quantum circuits that can be simulated by \textsc{lowesa}. The light boxes are arbitrary (noisy) Clifford gates, the blue boxes are parameterised $z$-rotations and the red kites represent Pauli noise channels.\\
    b) Diagrammatic sketch of \textsc{lowesa} as described in Algorithm~\ref{algo:1} applied to circuits given by Equation~\eqref{eq:cliff_var_circ}. The Pauli operator $P$ is propagated backwards through the circuit where every Clifford gate transforms it into another Pauli, and the decomposition of the parameterised $Rz$ rotations into process modes $D_0, D_1, D_{-1}$ splits the propagation up into paths that may annihilate. A cut-off of $\ell=2$ is chosen which artificially annihilates paths that branch into $D_1,D_{-1}$ more than $2$ times.}
    \label{fig:lowesa-schematic}
\end{figure*}

\section{Background}

\subsection{Parameterised quantum circuits}
A PQC on $n$ qubits is defined as a sequence of $m$ unitary gates, each parameterised by a component of a parameter vector $\thv$.
Here, we consider the case where the gates are alternating layers of Clifford operations $C_i$ and single-qubit $z$-rotations $R^{(q_i)}_z(\theta_i)= e^{-i \theta_i/2 \, Z^{(q_i)}}$ such as
\begin{equation}
    \label{eq:cliff_var_circ}
	U(\thv) = \left( \prod_{i = 1}^{m} C_i R^{(q_i)}_z(\theta_i) \right) C_0 \,.
\end{equation}
The parameters $\thv \in [0, 2\pi]^m$ can therefore be equivalently described as rotation angles.
This specific form is operationally relevant, and since Clifford unitaries and single qubit rotations form a universal gate set, any PQC can be cast in this way (up to fixing a subset of the parameters).

Typical VQAs involve initialising the quantum computer in $|\boldsymbol 0\> = |0\>^{\otimes n}$, applying the PQC and measuring an observable to obtain a cost function. We denote the set of single-qubit Pauli operators by $\mathbb{P} =\{I, X, Y, Z\}$ and the expectation value for a specific $n$-qubit Pauli operator $P \in \mathbb{P}^{\otimes n}$ by 
\begin{equation}
	f(\thv) := \Tr(P \, \U_{\thv}[|\boldsymbol 0\>\<\boldsymbol 0|]),
    \label{eq:cost_fn}
\end{equation}
where the unitary channel is $\U_{\thv}[\cdot] := U(\thv)[ \cdot ] U^\dagger(\thv)$.

\subsection{Modelling noisy operations}
We are interested in the classical simulatability of VQAs affected by noise, and we model the noisy PQC using general Pauli channels, which are probabilistic mixtures of unitary $n$-qubit Pauli operator evolutions.  For a single qubit, a general Pauli channel is given by
\begin{align}
    \N_{Pauli}(p_X, p_Y, p_Z) [\rho] = &(1-p_X -p_Y-p_Z) \rho\\ \nonumber
        &+ p_X X\rho X + p_Y Y\rho Y + p_Z Z\rho Z.
\end{align}
These are often used to model local decoherent processes in quantum hardware. The dephasing channel $\N_{Pauli}(0, 0, p)$ is a particular example which models interactions between a qubit and the external environment.  The best-fit noise parameters $\{p_X, p_Y, p_Z\}$ for each qubit can be estimated experimentally via procedures like cycle benchmarking \cite{erhard2019characterizing}.

The general noise model we consider takes the form
\begin{equation}
    \tilde{\U}_{\thv} = \left( \bigcirc_{i}\; \tilde{\C}_i \circ \tilde{\mathcal{R}}^{(q_i)}_z(\theta_i) \right) \circ \tilde{\C_0} \label{eq:tildeU},
\end{equation}
where each single and two-qubit noisy gate is given by the target unitary followed by a Pauli channel acting on the same subset of qubits. Specifically, we have $\tilde{\R}_{z}^{(q_i)} (\theta_i) = \N_{Pauli}\circ \R_{z}^{(q_i)} (\theta_i) $ and $\tilde{\C}_i = \C_i\circ \M_i $, where $\M_i$ are multi-qubit Pauli channels. 

\subsection{Pauli transfer matrices and simulation algorithms}
\label{sec:simulationintro}
When studying generic quantum operations it can often be useful to use the \textit{Pauli transfer matrix} (PTM) formalism \cite{chow2012universal}. Let us briefly review it. 
In the PTM formalism, one takes the view of the normalised Pauli basis $\hat{\mathbb{P}} = \frac{1}{\sqrt{2}}\{I,X,Y,Z\}$, where a normalised Pauli operator $\hat{P}_i \in \hat{\mathbb{P}}^{\otimes n}$ is a basis vector $|P_i\rrangle$ in the space $\mathbb{R}^{4^n}$. The normalisation ensures that $\llangle P_i|P_j \rrangle = \tr(\hat P_i \hat P_j) = \delta_{ij}$. Quantum states can be written in this basis as $|\rho\rrangle$, 
\begin{equation}
    [|\rho\rrangle]_i = \tr(\rho \hat P_i),
\end{equation}
extending the identification of a one-qubit density matrix with its Bloch vector to higher dimensions. For instance, in this basis we represent the density matrix $|0\rangle\langle0|$ as $|0\rrangle = [1/\sqrt{2}, 0, 0, 1/\sqrt{2}]$.
Then, a quantum channel $\E$ is a matrix (the PTM) $\mathbf{E} \in \mathbb{R}^{4^n \times 4^n}$, 
\begin{equation}
    [\mathbf{E}]_{ij} = \llangle P_i | \mathbf{E} | P_j \rrangle = \tr(\hat P_i\E[\hat P_j]),
\end{equation}
and therefore expectation values of Pauli operators are written as $\llangle P_i | \mathbf{E} | \rho \rrangle = \tr(\hat P_i \E[\rho])$. Composition of quantum channels becomes matrix multiplication.

The PTM formalism can be used to calculate expectation values in the \textit{Heisenberg picture} via Pauli back-propagation, where the quantum channels are seen as acting on the measurement operator instead of the state~\cite{gottesman1998heisenberg}. In PTM form this adjoint operation corresponds to simply taking the transpose of the expectation value
\begin{equation}
    \llangle P | \mathbf{E} | \rho \rrangle = \llangle \rho | \mathbf{E}^\intercal | P \rrangle,
\end{equation}
which is possible for any $\E$. This perspective provides an efficient approach to classically computing expectation values. Take an $n$-qubit channel $\E$ and assume it can be decomposed as a sum of $N$ Clifford unitary channels $\E_i$ via $\E = \sum^N_{i=1} c_i \E_i$, $\sum_{i} c_i = 1$. Also consider a stabiliser state~\cite{gottesman1998heisenberg} $\rho$ such that the expectation value with any Pauli operator can be evaluated efficiently. Then, given a Pauli $P$, the expectation value $\llangle P | \mathbf{E} | \rho \rrangle$ can be expanded as a sum of $N$ terms $\llangle P | \mathbf{E}_i | \rho \rrangle$. As Clifford unitaries are generalised permutation matrices in the PTM representation we get $\llangle \rho | \mathbf{E}^\intercal_i | P \rrangle = \llangle \rho | P'_i \rrangle$ (up to a phase), for some other Pauli operator $P_i'$. When $\E_i$ is an $n$-qubit Clifford unitary then it can be synthesised into at most $O(n^2/log(n))$ gates \cite{aaronson2004improved} and the change of Pauli frame from $P$ to $P_i'$ can be efficiently computed in $O(n^2)$ \cite{chamberland2018fault, gottesman1998heisenberg}. Finally, since $\rho$ is assumed a stabiliser state, the expectation value $\llangle \rho | \mathbf{E}^\intercal_i | P \rrangle$ can be efficiently computed in $O(n^2)$. This gives an efficient classical algorithm to compute expectation values when $N \sim \text{poly}(n)$. 

\subsection{Prior art}
This approach is not new. The decomposition of general channels into sums of stabiliser channels (Cliffords and Pauli measurements) for the purpose of quantum circuit simulation was introduced in Ref.~\cite{bennink2017unbiased}. A similar sum-over-Clifford algorithm for unitary circuits was explored in Ref.~\cite{bravyi2019simulation}. A PTM-based algorithm for both exact and noisy circuit simulation has been proposed in Ref.~\cite{huang2022classical} from a Schr\"odinger perspective (state propagation).
The work in Ref.~\cite{rall2019simulation} is the closest to the method used here, as it covers the PTM representation in conjunction with a Heisenberg picture simulation method. In addition, it discusses the effect on simulatability of adding symmetric depolarising noise on $z$-rotation gates.

However, something that to our knowledge has not been made explicit before is that the method can be generalised beyond decompositions into Clifford unitaries (or near-Clifford unitaries \cite{bravyi2019simulation}) and Pauli measurement channels.  Indeed, here we will consider general processes $\E_i$ for which the expectation value $\llangle P | \mathbf{E}_i | \rho \rrangle$ can be evaluated efficiently. Notably, the processes $\E_i$ need not even be valid quantum channels (or completely positive trace preserving maps), we only require that its PTM representation is sufficiently sparse. This occurs when the adjoint channel $\E_i^{\dagger}$ maps every Pauli operator into a combination of small, $O(\text{poly}(n))$, number of Pauli operators. This echoes remarks in Ref.~\cite{nest2009simulating}, although that work is in the Schr\"odinger picture. In our case, the $\E_i$ will correspond to compositions of Clifford unitaries and processes that map every Pauli operator to a single (possibly distinct) Pauli operator or to zero.

\section{Classical simulation of uncorrelated parameter VQAs}
\label{sec:Simulation}

\subsection{Strategy}
We first show how the noisy variational circuits considered in Equation~\eqref{eq:cliff_var_circ} admit a linear decomposition into processes that are amenable to the classical simulation outlined in Sec.~\ref{sec:simulationintro}.  To that aim, it turns out that a decomposition into Fourier series of the noisy channel $\tilde{\U}_{\theta}$, and therefore noisy cost function, results in processes that map a Pauli operator into multiple Pauli operators, and thus their composition may lead to an exponential accumulation of terms (see Sec.~\ref{sec:example}).  However, a different choice of basis involving trigonometric polynomials remedies this to produce a decomposition for which the dominant coefficients in the expansion can be efficiently computed.

Let $\mathbf{R}_z(\theta)$ be the PTM of $\mathcal{R}_{z}(\theta)$ and let $\mathbf{N}$ be the PTM of the Pauli noise channel $\N_{Pauli}$, $\mathbf{N} = \text{diag}(1, q_X,q_Y, q_Z)$. The eigenvalues of the Pauli channel are related to the error probabilities as $q_X = 1- 2(p_Z+p_Y)$, $q_Y = 1- 2(p_Z+p_X)$, $q_Z = 1- 2(p_X+p_Y)$. 
Then, the noisy channel $\tilde{\R}_z(\theta) = \N_{Pauli} \circ \R_z(\theta)$ has, with respect to the orthonormal basis $\{|I\rrangle, |X\rrangle, |Y\rrangle, |Z\rrangle \}$, the PTM
\begin{equation}
    \mathbf{N\cdot R} = \begin{pmatrix}
	1 & 0 & 0 & 0\\
	0 & q_X \cos{\theta}& - q_X\sin{\theta} & 0\\
	0 & q_Y\sin{\theta} & q_Y\cos{\theta} & 
        0\\
	0 & 0 & 0 & q_Z
 \end{pmatrix}\,.
\end{equation}
Denote the projectors by $\Pi_0 =|I\rrangle\llangle I| + |Z\rrangle\llangle Z|$, $\Pi_X = |X\rrangle\llangle X|$ and $\Pi_Y = |Y\rrangle \llangle Y|$.
Then we can define new quantum processes $\{\mathcal{D}_0, \mathcal{D}_{1}, \mathcal{D}_{-1}\}$ to be used in the simulation algorithm via their PTM representation ${\mathbf{D_0}} = \Pi_0 {\mathbf{NR} }\Pi_0$, $\mathbf{D_1} = \Pi_X {\mathbf{NR}} \Pi_X + \Pi_Y {\mathbf{NR}} \Pi_Y $ and $\mathbf{D_{-1}} = \Pi_X {\mathbf{NR}} \Pi_{Y} + \Pi_Y {\mathbf{NR}} \Pi_{X} $ such that
\begin{equation}
    \ptm{N\cdot R} = \ptm{D_0} + \cos{\theta}\, \ptm{D_1} + \sin{\theta}\, \ptm{D_{-1}}.
\end{equation}
Expanding out these processes, we see that each of them maps any single Pauli operator into at most another single Pauli operator (up to a scaling),
\begin{align}
\label{eq:proc_modes}
\mathbf{D}_0 = \begin{pmatrix}
	1 & 0 & 0 & 0\\
	0 & 0 & 0 & 0\\
	0 & 0 & 0 & 0\\
	0 & 0 & 0 & q_Z
\end{pmatrix},& \;\;
\mathbf{D}_1 = \begin{pmatrix}
	0 & 0 & 0 & 0\\
	0 & q_X & 0 & 0\\
	0 & 0 & q_Y & 0\\
	0 & 0 & 0 & 0
\end{pmatrix},\\
\mathbf{D}_{-1} = &\begin{pmatrix}
	0 & 0 & 0 & 0\\
	0 & 0 & -q_X & 0\\
	0 & q_Y & 0 & 0\\
	0 & 0 & 0 & 0
\end{pmatrix}.
\end{align} 

This decomposition allows us to expand the noisy circuits in terms of a multivariate trigonometric basis, which is a more convenient choice for the classical simulation. Consider $\Phi_{\omv}(\thv) := \prod_{i=1}^m \phi_{\omega_{i}} (\theta_{i})$ where $\phi_0(\theta) = 1,\; \phi_1(\theta) = \cos(\theta), \; \phi_{-1}(\theta) = \sin(\theta)$ are \textit{trigonometric monomials} that encode the $\thv$ dependence. Then, the \emph{noisy} variational circuits admit the decomposition 
\begin{equation}
    \tilde{\U}_{\theta} = \sum_{\omv \in \{0,\pm 1\}^{m}} \Phi_{\omv}(\thv) \mathcal{D}_{\omv},
\end{equation}
where each process $\D_{\omv}$ is labelled by a frequency vector $\omv \in  [0, \pm 1]^m$ and given by
$\mathcal{D}_{\omv} := \left(\bigcirc_{i}\; \mathcal{C}_i \circ \mathcal{D}_{\omega_i} \right)  \circ \mathcal{C}_0 $. In keeping with previous work~\cite{cirstoiu2017global} we call these channels \emph{process modes}.

Overall, this new decomposition yields the following Fourier series representation for the cost function in Equation~\eqref{eq:cost_fn}
\begin{equation}\label{eq:cost}
	\tilde{f}(\thv) = \sum_{\omv} d_{\omv} \Phi_{\omv}(\thv).
\end{equation}
The \emph{Fourier coefficients} are given by
\begin{equation}
    d_{\omv} := \tr(P \D_{\omv}[|0\rangle\langle0|]) = \sqrt{2^n} \llangle P | \mathbf{D}_{\omv} | 0 \rrangle,
\end{equation}
where the factor $\sqrt{2^n}$ is necessary since we have defined $f(\thv)$ as the expectation value of an unnormalised Pauli operator.

Note that in the above, the Clifford unitaries $\C_i$ were noise-free and the parameterised rotation gates carried a time-independent Pauli noise. A similar decomposition arises when we consider the general Pauli noise model for $\tilde{\C_i} = \C_i \circ \M_i$. In this case, we denote the resulting process modes by $ \mathcal{D}'_{\omv} := \left(\bigcirc_{i}\; \mathcal{C}_i \circ \M_i \circ \mathcal{D}_{\omega_i} \right)  \circ \mathcal{C }_0 \circ \M_0$ and the corresponding coefficients by $d'_{\omv} = \sqrt{2^n} \llangle P | \mathbf{D'}_{\omv} | 0 \rrangle $. We first describe and analyse the proposed classical algorithm for the simpler noise model that only affects the parameterised gates. This is purely to make the exposition easier to follow. The same principle works in the general case (see Sec.~\ref{sec:generalnoise}). Furthermore, the analysis extends to time-dependent Pauli errors. The noise models considered here also include the local depolarising channels that have been previously used in classical algorithms for noisy random circuit sampling~\cite{aharonov2022polynomial}. Both in our case and in previous work there is an implicit assumption that the Pauli error probabilities for each gate are known a-priori.

\subsection{The LOWESA simulation algorithm}

We are now in a position to state the simulation algorithm, which shares similar features to the algorithm in Ref.~\cite{gao2018efficient}, but applied to the task of estimating expectation values and to a different family of circuits.
We name it \textsc{lowesa} for \textsc{lo}w \textsc{w}eight \textsc{e}fficient \textsc{s}imulation \textsc{a}lgorithm (pronounced ``low-EE-sa''). 

Given a cut-off parameter $\ell$, \textsc{lowesa} returns a \emph{function} $\tilde{g}$ approximating the noisy cost function $\tilde{f}$ constructed from all the low-weight $|\omv|\leq \ell$ terms. 
This function is expressed as a trigonometric series and can therefore be used to evaluate the cost estimate for any parameter vector $\thv$ using
\begin{equation}
    \tilde{g}(\thv) = \sum_{|\omv|\leq \ell} d_{\omv} \Phi_{\omv}(\thv)
\end{equation} 
with low computational effort.
As the algorithm produces all $\{d_{\omv}
\}_{|\omv|\leq l}$, the function evaluation is independent of qubit number and depth.

\clearpage
\textsc{lowesa} involves the following steps:
\begin{algorithm}[H]
\caption{ [{\textsc{\bf LOWESA}}] Simulating cost functions of noisy VQAs with uncorrelated angles }
\label{algo:1}
\raggedright\textbf{Input:} Quantum circuit given by Equation~\eqref{eq:cliff_var_circ} affected by Pauli noise with probabilities $p_Z > 0$ or $p := \min\{p_{X/Y}\} > 0$, and defined by process modes $\{\D_{\omv}\}$; measurement Pauli operator $P$; cut-off parameter $\ell$.\\
\textbf{Output:} $\tilde{g}(\thv)$, an approximation of $\tilde{f}(\thv)$.
\begin{algorithmic}[1]
\Procedure{lowesa}{}
\State $\tilde{g}(\thv) \leftarrow 0$
\State {\bf{Run}} \textsc{Low-weight coefficients}  to calculate $d_{\omv} = \sqrt{2^n} \llangle 0| \mathbf{D}^\intercal_{\omv} |P\rrangle$  $\forall \ |\omv|\le\ell$.
\ForAll{$|\omv|\le \ell$ with non-zero $d_{\omv}$}
\State $\tilde{g}(\thv) \leftarrow \tilde{g}(\thv) + d_{\omv} \Phi_{\omv}(\thv)$
\EndFor
\State\textbf{Return} $\tilde{g}(\thv)$
\EndProcedure
\end{algorithmic}
\textbf{Subroutine:} Calculate non-zero $d_{\omv} = \sqrt{2^n} \llangle 0| \mathbf{D}^\intercal_{\omv} |P\rrangle$  $\forall \ |\omv|\le\ell$.
\begin{algorithmic}[1]
	\Procedure{Low-weight coefficients}{}
	\For{$i=m$ to $1$}{}
	\State $P\leftarrow  C_i\hc P C_i$, up to a phase $\phi_i \in \{\pm 1\}$
	\If{$P_{q_i} \in \{I, Z\}$  (i.e $P$ on qubit $q_i$)}{}
	\State $\omega_i = 0$
	\ElsIf{$P_{q_i} \in \{X, Y\}$ {\bf{and}}  $|\omv|  < l$ }{}
	\State Split into branches $\omega_i =1$ or $\omega_i =-1$
	\State $|\omv| \leftarrow + 1$ 
	\EndIf
	\State $d_{\omega_i}P \leftarrow \D_{\omega_i}\hc(P)$
	\EndFor
	\State $P\leftarrow C_0\hc P C_0$ and phase $\phi_0 \in \{\pm 1\}$
	\State\textbf{Return} $d_{\omv} = \sqrt{2^n} (\prod_{i=1}^{m}d_{\omega_i}\phi_i )\phi_0 \llangle 0|P\rrangle$
	\EndProcedure
\end{algorithmic}
\end{algorithm}
Note that, while our exposition deals with expectation values of a single Pauli operator, the results extend immediately to general observables that can be decomposed into ${\rm{poly}}(n)$ Pauli operators.

To analyse the asymptotic complexity, we need to (1) show that each term $d_{\omv}$ can be efficiently estimated, (2) count the number of non-zero process modes $\D_{\omv}$ with $|\omv|\leq \ell$, and (3) evaluate the accuracy in the approximation $\tilde{g}\approx \tilde{f}$. Condition (1) is satisfied by construction - the choice of trigonometric basis ensures that (the adjoint of) $\D_{\omv}$ maps a Pauli operator to either zero or a (different, scaled) Pauli operator. Each $d_{\omv}$ can be individually estimated in at most $O(n^2 m)$ steps using the Pauli back-propagation method outlined in Sec.~\ref{sec:simulationintro}.

For (2), note that while there are $\binom{m}{|\omega|} \, 2^{|\omega|}$ paths with a fixed weight $|\omega|$ for a total of at most $m^{\O(\ell)}$ within the cut-off, many of these will be zero when acted upon the input $|P\rrangle$. This is due to the fact that process modes in Equation~\eqref{eq:proc_modes} each annihilate half of the Paulis. A more efficient algorithm than calculating all possible process modes thus works as follows.

\begin{figure*}
    \centering
    \includegraphics[width=0.95\textwidth]{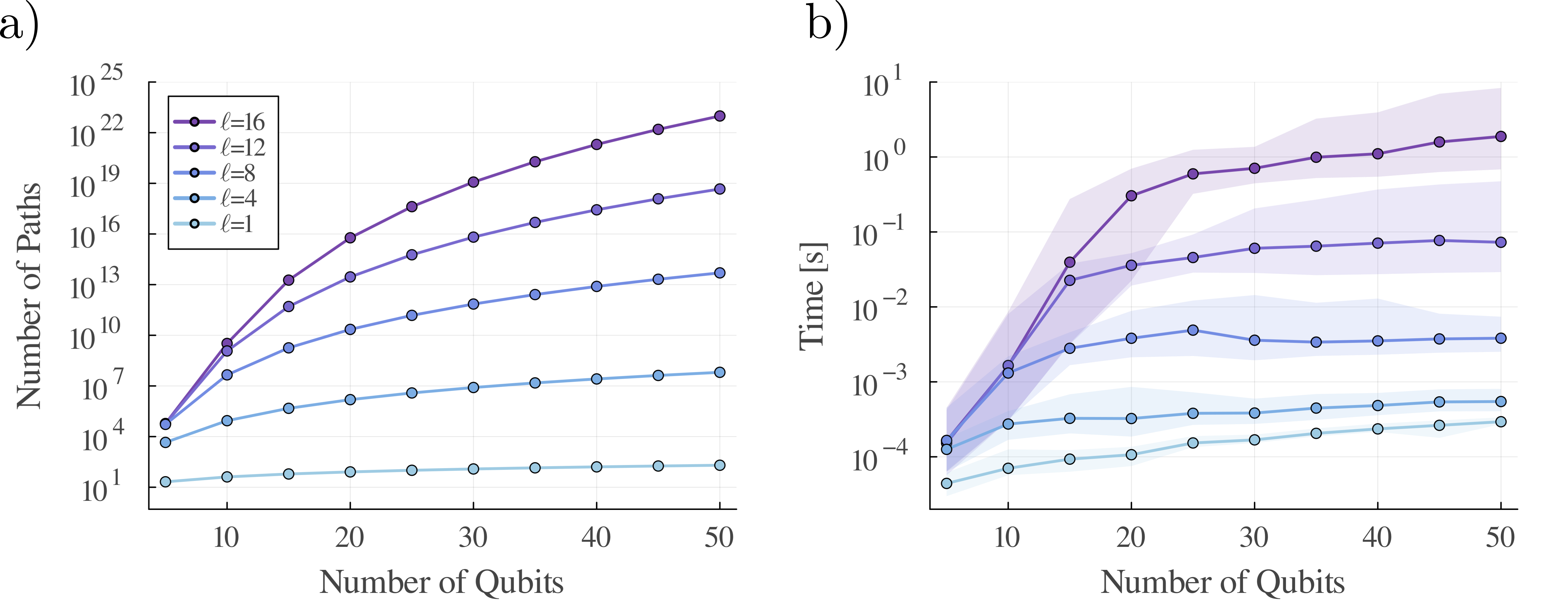}
    \caption{Scaling of \textsc{lowesa} with the number of qubits $n$ and cut-off parameter $\ell$. The circuit structure consists of two parameterised layers of $H-Rz(\theta_i)-X-H$ on each qubit, where the Hadamard and X gates are chosen with 0.5 probability, followed by CNOTs placed on a 2D topology. a) Total number of paths for a given $\ell$, which equals $\sum_i^{\ell} {m \choose i} 2^i$. Note that the number of paths that \textsc{lowesa} needs to explore is dramatically lower. 
    b) Wall time to run \textsc{lowesa} with truncation parameter $\ell$ on an average laptop without parallelisation.
    Each data point represents an average over 500 different randomized circuits with Pauli Z measurement operators that act on a random subset of qubits. 
    The shading shows the 90\% confidence interval. The simulation of the Clifford gates used a look-up table, meaning that the scaling in $n$ is entirely due the scaling of $m$ with $n$.
    }
    \label{fig:runtimes}
\end{figure*}

Start with the target Pauli measurement operator $P$ and propagate in the Heisenberg picture through the circuit. For each Clifford unitary $C_i$, updating the Pauli operator (by conjugation) takes at most $O(n^2)$. Each process $\D_{\omega_i}$ within a path $\D_{\omv}$ acts non-trivially on a single qubit, $q_i$. If the propagated Pauli operator on that qubit is either $I$ or $Z$  then only $\D_0$ leads to a non-zero path, otherwise if it is $X$ or $Y$ either $\D_{1}$ or $\D_{-1}$ are valid choices. As only $\D_{\pm 1}$ contribute to the total weight and $|\boldsymbol{w}|\leq \ell$, it suggests a binary tree-like data structure with $\ell$ layers to keep track of the change of Pauli frame and the different branching possibilities. A branch may terminate sooner than if it propagated the Pauli through the entire circuit. The number of branches and therefore valid paths $\D_{\omega}$ will be at most $2^{\ell}$. Putting everything together, this reduces the total complexity of evaluating all non-zero $d_{\omega}$ with $|\omega|\leq \ell$ to $O(n^2 m 2^\ell)$ in the worst case. We note that the quadratic scaling in $n$ is for general $n$-qubit Clifford unitaries, and can be improved for $k$-local (or sparse) unitaries. In particular, if one fixes the set of Clifford unitaries that are executed within the circuits (for example the set $\{X, H, CNOT\}$), one can employ time-memory trade-off tools like look-up tables for each $k$-body Clifford operation and how they act on every $k$-body Pauli operator.
In Figure~\ref{fig:runtimes} we illustrate the runtime of \textsc{lowesa} using this technique on a circuit structure that is typically challenging for classical simulators.

Finally, condition (3) remains to be verified so that \textsc{lowesa} yields an accurate simulation of the noisy cost function. Given a cost function $\tilde{f}$ and its approximation $\tilde{g}$, we define the average $L^2$-norm error over the space of parameters $\Theta = [0, 2\pi]^m$
\begin{equation}
    \label{eq:err}
	\Delta(\tilde{f}, \tilde{g}) := \left(\frac{1}{|\Theta|} \int_\Theta |\tilde{f}(\thv) - \tilde{g}(\thv)|^2 d\thv\right)^{1/2},
\end{equation}
where the integration measure is $d\boldsymbol{\theta} = d \, \theta_1 d\, \theta_2 ...d\, \theta_m$ and $|\Theta| = (2\pi)^{m}$ is a normalisation factor so that $\frac{1}{|\Theta|} \int d\, \boldsymbol{\theta} =1$.
In Appendix~\ref{ap:proof-main} we prove the following result
\begin{theorem}
    \label{thm:uncorr_cliff}
    Consider a $n$-qubit VQA with a PQC as in Equation~\eqref{eq:cliff_var_circ} having $m$ independently parameterised $z$-rotations affected 
    by a single-qubit Pauli noise channel $\mathcal{N}_{Pauli}(p_X, p_Y, p_Z)$ as in Equation~\eqref{eq:tildeU}.
    Define $p = \min\{p_X, p_Y\}$ and require at least one of $p$, $p_Z$ be greater than 0.
    
    Then, for any weight cut-off $\ell \in \mathbb{N}$, \textsc{lowesa} (Algorithm~\ref{algo:1}) returns an approximation $\tilde{g}$ for the noisy cost function $\tilde{f}$ with error
    \begin{equation}
        \Delta(\tilde{f}, \tilde{g}) \le (1 - 2p - 2p_Z)^{\ell + 1} \le e^{-2(p+p_Z)\ell}
    \end{equation}
    and runs in time at most $O(n^2 m \, 2^\ell)$.
\end{theorem}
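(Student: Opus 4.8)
The plan is to reduce the $L^2$ error of Equation~\eqref{eq:err} to a tail sum of squared Fourier coefficients and then control that tail by a second-moment argument in which the dominant noise eigenvalue supplies the exponential decay. First I would record that the trigonometric monomials $\Phi_{\omv}$ are orthogonal under the normalised measure on $\Theta$: a short computation gives $\frac{1}{|\Theta|}\int_\Theta \Phi_{\omv}\Phi_{\omv'}\,d\thv = \delta_{\omv\omv'}\,2^{-|\omv|}$, since each $\phi_0$ contributes $1$ and each $\phi_{\pm1}$ contributes $1/2$ to the squared norm, while distinct modes integrate to zero. Because $\tilde f-\tilde g=\sum_{|\omv|>\ell}d_{\omv}\Phi_{\omv}$ by construction (Algorithm~\ref{algo:1}) and the $d_{\omv}$ are real, Parseval then yields $\Delta(\tilde f,\tilde g)^2=\sum_{|\omv|>\ell}d_{\omv}^2\,2^{-|\omv|}$. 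It remains to bound this tail by $\lambda^{2(\ell+1)}$, where I set $\lambda:=1-2p-2p_Z=\max\{q_X,q_Y\}$ and work in the mild-noise regime $q_X,q_Y,q_Z\in[0,1)$ (so $0\le\lambda<1$, the case $\lambda<1$ being guaranteed by the hypothesis that $p$ or $p_Z$ be positive).

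Next I would apply a rescaling trick. Define boosted coefficients $\hat d_{\omv}:=\lambda^{-|\omv|}d_{\omv}$. Since every tail term has $|\omv|\ge\ell+1$ and $\lambda\le1$, we get $d_{\omv}^2\,2^{-|\omv|}=\lambda^{2|\omv|}\hat d_{\omv}^2\,2^{-|\omv|}\le\lambda^{2(\ell+1)}\hat d_{\omv}^2\,2^{-|\omv|}$, hence $\Delta^2\le\lambda^{2(\ell+1)}\sum_{\omv}\hat d_{\omv}^2\,2^{-|\omv|}$. By the same orthogonality, $\sum_{\omv}\hat d_{\omv}^2\,2^{-|\omv|}$ is the parameter-average $\overline{\hat f^{\,2}}$ of the cost function $\hat f$ of the modified circuit obtained by rescaling the process modes of Equation~\eqref{eq:proc_modes} as $\D_{\pm1}\mapsto\lambda^{-1}\D_{\pm1}$ (equivalently $q_X,q_Y\mapsto q_X/\lambda,q_Y/\lambda$, leaving $q_Z$ and $\D_0$ untouched). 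The crux is therefore to show $\overline{\hat f^{\,2}}\le 1$.

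The hard part will be exactly this second-moment bound. Using independence of the angles, the average factorises over gates: $\overline{\hat f^{\,2}}=2^{n}\,\llangle PP|\,\mathbf W\,|\mathbf{00}\rrangle$, where $|PP\rrangle$ and $|\mathbf{00}\rrangle$ are the twofold tensor copies of $|P\rrangle$ and $|\mathbf 0\rrangle$ in the doubled PTM space, and $\mathbf W$ interleaves the doubled Cliffords $\mathbf C_i\otimes\mathbf C_i$ with the single-gate averages $\mathbf A_i:=\hat{\mathbf D}_0^{\otimes2}+\tfrac12\hat{\mathbf D}_1^{\otimes2}+\tfrac12\hat{\mathbf D}_{-1}^{\otimes2}$ (all cross terms die since $\overline{\cos}=\overline{\sin}=\overline{\cos\sin}=0$, and $\overline{\cos^2}=\overline{\sin^2}=\tfrac12$). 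I would then back-propagate $\llangle PP|$ while tracking the $\ell_1$ weight $W:=\sum_Q|\alpha_Q|$ of the coefficient vector over diagonal doubled Paulis $\llangle QQ|$. A Clifford sends $\llangle QQ|\mapsto\llangle Q'Q'|$ (the sign squares away) and preserves $W$; a direct evaluation of $\mathbf A_i$ on $\llangle QQ|$ shows it acts on qubit $q_i$ by $I\mapsto I$ (factor $1$), $Z\mapsto Z$ (factor $q_Z^2$), and $X$ or $Y\mapsto$ an equal split over $\{XX,YY\}$ with total factor $(q_X/\lambda)^2$ or $(q_Y/\lambda)^2$. Each gate is thus weight-non-increasing, and this is precisely where $\lambda=\max\{q_X,q_Y\}$ is essential: it forces the rescaled branching factors $(q_X/\lambda)^2,(q_Y/\lambda)^2\le1$ (and $q_Z^2\le1$ always), with all coefficients remaining nonnegative. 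Starting from $W=1$ at $\llangle PP|$, we conclude $W\le1$ after all gates; pairing with $|\mathbf{00}\rrangle$ retains only the weight on $\{I,Z\}^{\otimes n}$ strings (each contributing $2^n\llangle QQ|\mathbf{00}\rrangle=1$), so $\overline{\hat f^{\,2}}=\sum_{Q\in\{I,Z\}^{n}}\alpha_Q\le W\le1$.

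Assembling the pieces gives $\Delta\le\lambda^{\ell+1}=(1-2p-2p_Z)^{\ell+1}$, and the elementary inequality $1-x\le e^{-x}$ together with $\ell+1\ge\ell$ upgrades this to $e^{-2(p+p_Z)\ell}$; the runtime $O(n^2m\,2^\ell)$ is already furnished by the complexity analysis preceding the theorem (efficient per-coefficient back-propagation, at most $2^\ell$ surviving branches). I expect the weight-tracking of the previous paragraph to be the genuine obstacle: the naive per-path bound $|d_{\omv}|\le\lambda^{|\omv|}$ combined with the $\binom{m}{|\omv|}2^{|\omv|}$ count of fixed-weight paths is far too lossy, so one really needs the cancellations produced by the $\tfrac12$ weights of the averaged second moment to tame the tail.
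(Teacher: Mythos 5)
Your proposal is correct, and its outer scaffolding coincides with the paper's: the orthogonality relation $\frac{1}{|\Theta|}\int_\Theta \Phi_{\omv}\Phi_{\omv'}\,d\thv = 2^{-|\omv|}\delta_{\omv\omv'}$, the Parseval identity $\Delta^2(\tilde f,\tilde g)=\sum_{|\omv|>\ell}2^{-|\omv|}d_{\omv}^2$, and deferring the $O(n^2 m\,2^\ell)$ runtime to the branching-tree analysis (the paper's appendix merely re-derives that same argument) are all as in the paper. Where you genuinely diverge is the tail bound. The paper extracts the \emph{entire} noise attenuation from each coefficient, writing $|d_{\omv}|=Q_{\omv}|d^{0}_{\omv}|$ with $Q_{\omv}\le q^{|\omv|}$, $q=\max\{q_X,q_Y\}$, and then the residual sum $\sum_{\omv}2^{-|\omv|}|d^{0}_{\omv}|^2=\frac{1}{|\Theta|}\int_\Theta|f(\thv)|^2 d\thv\le 1$ is immediate because the noiseless $f$ is a physical Pauli expectation with $|f|\le 1$ pointwise. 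You instead extract only $\lambda^{|\omv|}$ with $\lambda=\max\{q_X,q_Y\}$, leaving residual factors $q_{X/Y}/\lambda,\;q_Z\le 1$ inside a non-physical rescaled circuit, and must then prove its mean square is at most $1$ via the doubled-PTM second moment: the vanishing cross terms, the per-gate averages $\mathbf A_i=\hat{\mathbf D}_0^{\otimes 2}+\tfrac12\hat{\mathbf D}_1^{\otimes2}+\tfrac12\hat{\mathbf D}_{-1}^{\otimes2}$, the $\ell_1$-weight monotonicity over diagonal doubled Paulis $\llangle QQ|$ (with branching factors $1$, $q_Z^2$, and $(q_{X/Y}/\lambda)^2$, all $\le 1$), and the endpoint pairing $2^n\llangle QQ|\mathbf{00}\rrangle=1$ for $Q\in\{I,Z\}^{\otimes n}$ are all correct as you state them, and deliver the identical constant $\lambda^{\ell+1}=(1-2p-2p_Z)^{\ell+1}$. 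Your closing remark that the second-moment weight tracking is the genuine obstacle is therefore slightly overstated: once \emph{all} $q$-factors are pulled out, the residual object is a bona fide expectation value and the bound is one line, so the cancellations you engineer with the $\tfrac12$ weights can be bypassed entirely. What your route buys in exchange is that it never needs the residual to be physical and that the transfer-matrix weight tracking could be sharpened — following the actual distribution of attenuation factors per path rather than the worst case — toward the tighter typical-case behaviour the paper observes empirically in Figure~\ref{fig:accuracy}. One caveat you make explicit and the paper leaves implicit: both arguments presuppose $q_X,q_Y\ge 0$ (the paper's $Q_{\omv}\le q^{|\omv|}$ and the positivity of $1-2p-2p_Z$ require it), and your $\lambda^{-1}$ rescaling additionally needs $\lambda>0$, the degenerate case $\lambda=0$ being trivial since every coefficient of weight $\ge 1$ then vanishes and $\Delta=0$.
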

 It follows from Theorem~\ref{thm:uncorr_cliff} that \textsc{lowesa} is efficient, as it scales polynomially with $m$ and $n$; however, the scaling with noise probability is considerably worse. For example, suppose we wish to have an error bounded by $\epsilon$. Then one would choose $\ell \approx \frac{1}{2p + 2p_Z} \log\epsilon^{-1}$, giving a runtime $O(2^{\frac{\log \epsilon^{-1}}{2p + 2p_Z}}\, n^2 m) $. While this is asymptotically efficient in the width and depth of the circuit, the dependency on the error rate limits its practicality. Notably the exponent may still be considerably large if the noise is small. When the goal is to simulate the expected outcome of a hardware implementation with a finite number of measurements $N_s$, the error can be chosen like $\epsilon\in\mathcal{O}(\frac{1}{\sqrt{N_s}})$, thus relaxing the precision requirements.

In Figure~\ref{fig:accuracy} we illustrate the mean accuracy of the algorithm for an example circuit of the hardware-efficient family. We observe that the error is typically up to two orders of magnitude lower than the bounds, suggesting these are loose and may be improved for the typical case.

\subsection{General Pauli noise models}
\label{sec:generalnoise}

The result can be extended to cover multi-qubit Pauli noise affecting all gates, not just the parameterised ones. In Appendix~\ref{ap:proof-extended} we prove the more general result:

\begin{theorem}\label{thm:general_noise}
Consider an $n$-qubit VQA under the noise model
\begin{equation}\label{eq:tildeU_general}
    \tilde{\U}_{\thv} = \left( \bigcirc_{i=1}^m\; \M_i \circ \C_i \circ \N_i \circ \mathcal{R}^{(q_i)}_z(\theta_i) \right) \circ \M_0 \circ \C_0
\end{equation}
where $\{\M_i\}$ are $n$-qubit Pauli channels with layer-dependent noise parameters and every z-rotation is followed by a local single-qubit Pauli noise $\N_i = \N_{Pauli}(p^i_X,p^i_Y,p^i_Z)$ with $p' = \min_i\{p^i_X,p^i_Y\}$, $p'_Z = \min_i\{p^i_Z\}$. Assume at least one of $p'$, $p'_Z$ is greater than 0.

Then, for any weight cut-off $\ell \in \mathbb{N}$, \textsc{lowesa} (Algorithm~\ref{algo:1}) with modified process modes $\mathcal{D}'_{\omv} = (\bigcirc_{i=1}^{m} \M_i \circ \C_i \circ \D_{\omega_i})\circ \C_0\circ \M_0$ and coefficients $d'_{\omv} = \sqrt{2^n} \llangle P| {\bf{D}'_{\omv}}|0\rrangle$ returns an approximation $\tilde{g}$ for the cost function $\tilde{f}$ with error
\begin{equation}
	\Delta(\tilde{f}, \tilde{g}) \le (1-2p'-2p'_Z)^{\ell+1}
    \le e^{-2(p'+p'_Z)\ell}
\end{equation}
and runs in time  at most $O(n^2 m 2^\ell)$.
\end{theorem}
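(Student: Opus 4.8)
The plan is to reduce the claim to a bound on the high-weight Fourier mass of $\tilde f$ and then control that mass through the noise-induced contraction carried by each process mode. First I would note that, exactly as in the uncorrelated case, the general noise model~\eqref{eq:tildeU_general} still admits the finite trigonometric expansion $\tilde{\U}_{\thv} = \sum_{\omv}\Phi_{\omv}(\thv)\,\D'_{\omv}$ with $\D'_{\omv}=(\bigcirc_i \M_i\circ\C_i\circ\D_{\omega_i})\circ\C_0\circ\M_0$, so that $\tilde f(\thv)=\sum_{\omv}d'_{\omv}\Phi_{\omv}(\thv)$ and $\tilde g(\thv)=\sum_{|\omv|\le\ell}d'_{\omv}\Phi_{\omv}(\thv)$. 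The monomials $\Phi_{\omv}$ are mutually orthogonal on $\Theta=[0,2\pi]^m$ with $\frac{1}{|\Theta|}\int_\Theta\Phi_{\omv}\Phi_{\omv'}\,d\thv=2^{-|\omv|}\delta_{\omv,\omv'}$, since $\frac{1}{2\pi}\int_0^{2\pi}\cos^2\theta\,d\theta=\frac{1}{2\pi}\int_0^{2\pi}\sin^2\theta\,d\theta=\tfrac12$ while all cross terms vanish. Parseval then collapses the error to a tail sum,
\begin{equation}
    \Delta(\tilde f,\tilde g)^2 = \sum_{|\omv|>\ell} 2^{-|\omv|}\,|d'_{\omv}|^2 ,
\end{equation}
so the entire theorem becomes a statement about how fast this tail decays in $\ell$.

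Next I would establish the per-mode magnitude bound $|d'_{\omv}|\le r^{|\omv|}$ with $r:=1-2p'-2p'_Z$. The relevant structural facts are that (i) every Clifford is a signed permutation in the PTM picture, sending one Pauli to one Pauli with magnitude-one coefficients; (ii) each multi-qubit channel $\M_i$ is \emph{diagonal} in the normalised Pauli basis with eigenvalues in $[-1,1]$, so it rescales a Pauli by a factor $\le 1$ without creating new ones; and (iii) the single-site modes of~\eqref{eq:proc_modes} again map one Pauli to at most one Pauli, where a weight-increasing $\D_{\pm1}$ at layer $i$ contributes exactly one off-diagonal eigenvalue $q^i_X$ or $q^i_Y$. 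Since $q^i_X=1-2(p^i_Y+p^i_Z)$ and $q^i_Y=1-2(p^i_X+p^i_Z)$, the definitions $p'=\min_i\{p^i_X,p^i_Y\}$ and $p'_Z=\min_i p^i_Z$ give $|q^i_X|,|q^i_Y|\le r$ uniformly in the physical regime; this is precisely why the minima appear in the hypotheses. Collecting one factor $\le r$ at each of the $|\omv|$ active sites and factors $\le 1$ everywhere else (including all $\M_i$, all $\D_0$, and the $\sqrt{2^n}\,\llangle 0|\cdot\rrangle$ overlap, which normalises to unity on the surviving $IZ$-strings) yields $|d'_{\omv}|\le r^{|\omv|}$.

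The main obstacle is that naively combining $|d'_{\omv}|\le r^{|\omv|}$ with the crude count of up to $2^{|\omv|}$ nonzero modes at weight $|\omv|$ exactly cancels the Parseval weight $2^{-|\omv|}$ and fails to beat the branching, losing the clean exponent. To close this gap I would pass to the doubled (second-moment) picture: write $2^{-|\omv|}|d'_{\omv}|^2$ as a matrix element of $\D'_{\omv}\otimes\D'_{\omv}$, weighted by $2^{-|\omv|}$, and sum over $\omv$ to obtain a transfer operator built from the per-site block $T_i=\ptm{D}_0^{\otimes2}+\tfrac12(\ptm{D}_1^{\otimes2}+\ptm{D}_{-1}^{\otimes2})$. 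The crucial observation is that the ``diagonal'' subspace spanned by $\{|P\rrangle\otimes|P\rrangle\}$ is closed under $\C_i^{\otimes2}$ (the same Clifford on both copies), under $\M_i^{\otimes2}$ (diagonal), and under $T_i$, and that on this subspace the $X/Y$ block of $T_i$ has spectral radius $\tfrac12((q^i_X)^2+(q^i_Y)^2)\le r^2$. Thus \emph{each unit of weight contributes a contraction $r^2$ that already absorbs the factor-two branching}, whereas inactive sites contribute factors $\le 1$ and the full second moment equals $\|\tilde f\|_2^2\le 1$. Introducing a weight-counting variable and applying a Markov-type estimate to the resulting geometric series gives the tail bound $\sum_{|\omv|>\ell}2^{-|\omv|}|d'_{\omv}|^2\le r^{2(\ell+1)}$, hence $\Delta(\tilde f,\tilde g)\le(1-2p'-2p'_Z)^{\ell+1}\le e^{-2(p'+p'_Z)\ell}$ by $1-x\le e^{-x}$.

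Finally, the runtime claim transfers verbatim from Theorem~\ref{thm:uncorr_cliff}. Because the $\M_i$ are diagonal in the Pauli basis, back-propagating $P$ through $\tilde{\U}_{\thv}$ in the Heisenberg picture still sends a single Pauli to a single Pauli at every step: each Clifford conjugation costs $O(n^2)$, each $\M_i$ merely multiplies the running scalar by one eigenvalue, and the only branching arises from the $\D_{\pm1}$ choices at the $\le\ell$ active sites, producing a binary tree with at most $2^\ell$ leaves. Summing over the $m$ layers gives total cost $O(n^2 m\,2^\ell)$, completing the argument.
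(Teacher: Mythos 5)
Your proposal is correct and arrives at the paper's exact bound, but the central tail estimate is obtained by a genuinely different route. The paper's proof is far more direct: it factors each noisy coefficient through its noiseless counterpart, $d'_{\omv} = Q'(\omv)\, d^0_{\omv}$ with $Q'(\omv) \le \prod_i (q^i)^{|\omega_i|} \le r^{|\omv|}$, pulls $\max_{|\omv|>\ell} Q'(\omv) \le r^{\ell+1}$ out of the Parseval tail, and then bounds the remaining sum by Parseval applied to the \emph{noiseless} cost function, $\sum_{\omv} 2^{-|\omv|}|d^0_{\omv}|^2 = \frac{1}{|\Theta|}\int_\Theta |f(\thv)|^2\, d\thv \le 1$, since $f$ is a Pauli expectation value. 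This two-line move completely sidesteps the mode-counting problem you identify (your naive route indeed loses a factor $1/(1-r^2)$). Your alternative — a doubled second-moment transfer operator $T_i = \ptm{D}_0^{\otimes 2} + \tfrac12(\ptm{D}_1^{\otimes 2}+\ptm{D}_{-1}^{\otimes 2})$ on the diagonal subspace, a weight-counting tilt $s = r^{-2}$, and a Markov step — is sound and closes to the same clean $r^{2(\ell+1)}$; it is closer in spirit to the noisy random-circuit-sampling analyses that inspired the algorithm, and it makes explicit \emph{why} the factor-two branching is absorbed ($\tfrac12((q^i_X)^2+(q^i_Y)^2)\le r^2$ per active site). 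But note that, after unwinding, your bound $F(r^{-2})\le 1$ is the noiseless Parseval bound in disguise, so the machinery buys generality of viewpoint rather than a stronger result. Two small points to tighten: first, "spectral radius" is not quite the right invocant, since the tilted blocks alternate with Cliffords and Pauli-channel squares; what you actually need (and what your computation delivers) is that all coefficients stay nonnegative in the diagonal basis $\{|P\rrangle\otimes|P\rrangle\}$ and the total mass contracts by $\tfrac12((q^i_X)^2+(q^i_Y)^2)\le r^2$ at each active site and by at most $1$ elsewhere — an $\ell_1$ column-sum argument, trivially fixable. Second, your claim $|q^i_X|,|q^i_Y|\le r$ "in the physical regime" silently assumes the Pauli eigenvalues are nonnegative (e.g. $p^i_X+p^i_Y+p^i_Z\le 1/2$); the paper's proof makes the same implicit assumption when writing $0 < Q_{\omv} \le q^{|\omv|}$, so this is a shared caveat rather than a gap. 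Your runtime argument matches the paper's: Pauli channels are diagonal in the PTM picture, so they only multiply a running scalar, leaving the $O(n^2 m\, 2^\ell)$ tree-propagation cost unchanged.
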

The result relies on the fact that any Pauli channel will map a propagated Pauli operator to itself, up to a proportionality factor that can be at most 1. In other words, this means that each of the modified process modes $\D_{\omv}'$ will act similarly to the previously considered modes $\D_{\omv}$ arising from the simplified error model, so that ${\bf{D}'}_{\omv}|P\rrangle \propto {\bf{D}}_{\omv}|P\rrangle$. Therefore the proof and the bounds follow in the same way as for Theorem~\ref{thm:uncorr_cliff}. The only modification to the algorithm is that to compute $d_{\omv}'$ one must also keep track of these proportionality factors along with the propagated Pauli. 

For general noise, it is difficult to improve upon the upper bound on the approximation error $\Delta$ since one can be in a situation where along the paths of weight $|\omv|=\ell+1$ the proportionality factors might all be 1 when propagating the Pauli operator through each Pauli channel $\M_i$. In practical situations the coefficients $d'_{\omv}$ will contract. For instance, let's assume that in the decomposition of the $n$ qubit Clifford operator $\C_i$ into primitive (single and two-qubit) gates each incurs a local single-qubit depolarising channel $\N_{dep}$ with error probability $\eta$. Then it follows we can find a tighter bound
\begin{equation}
    \Delta(\tilde{f}, \tilde{g}) \leq (1-2p-2p_Z)^{\ell+1} (1-\eta)^{\ell+1}.
\end{equation}
This comes from the fact that $\N_{dep}^{\dagger}(P)= (1-\eta)P$ if $P\in \{X,Y,Z\}$ and $\N_{dep}^{\dagger}(I)= I$ along with the previous observation that for valid paths leading to non-zero coefficients, $\D_{\pm 1}$ are applied to qubit $q_i$ whenever the propagated Pauli on qubit $q_i$ is not $I$ or $Z$. Therefore the noise from the Clifford part will contribute and at the very least contract by a factor of $(1-\eta)$ whenever we have a branching possibility to apply either $\D_{+1}$ or $\D_{-1}$, which are the only contributors to the total weight $|\omv|$. Note that this type of noise model has previously been considered in the context of noisy random circuit sampling~\cite{aharonov2022polynomial}. 

\begin{figure}[t]
    \centering
    \includegraphics[width=0.48\textwidth]{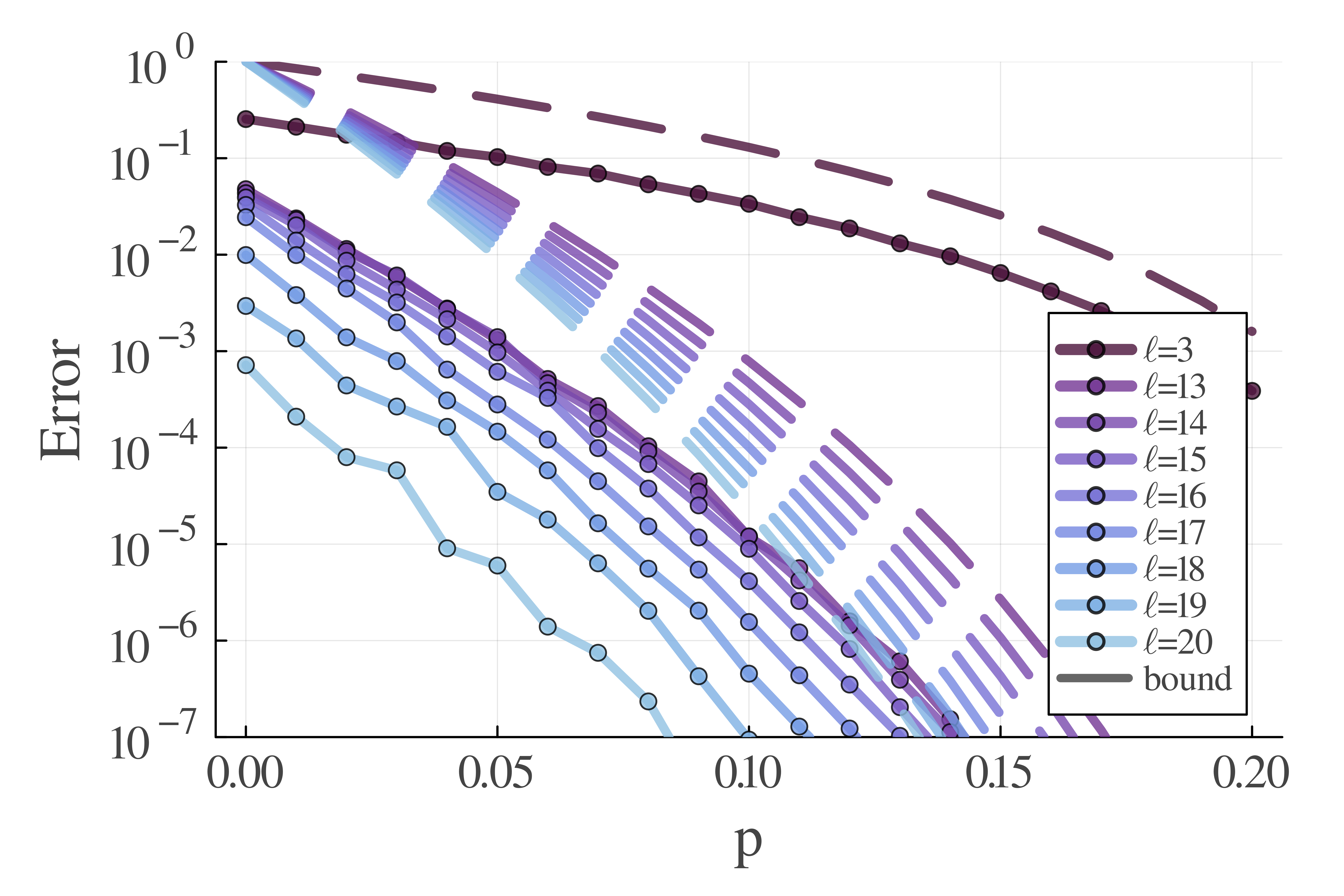}
    \caption{Accuracy benchmark of \textsc{lowesa} compared to the error bounds as predicted in Theorem~\ref{thm:uncorr_cliff}. We show the $L^2$ error of a single-qubit Pauli Y operator expectation with $\ell<m=60$ for two layers of a $n=10$ qubit circuit. The circuit consists of parametrized single-qubit gates $R_z(\thv_i)\,R_x(\thv_{i+1})\,R_z(\thv_{i+2})$ on each qubit followed by CNOT gates in a 2D topology. For this particular circuit, each entangler in the 2D topology was placed with a 0.5 probability. The noise model is symmetric depolarising noise, where the parameters are set $p_X = p_Y = p_Z = p$. Each point is averaged over 1000 random parameterisations of the same circuit to compare to the integral definition of our error bounds. All paths below $\ell=3$ and above $\ell=21$ annihilate. Consequently, the simulation with $\ell=21$ is exact.}
    \label{fig:accuracy}
\end{figure}

\subsection{Fixed (unparameterised) non-Clifford gates}
The extension of \textsc{lowesa} to the case where non-Clifford unparameterised gates are present is straightforward. As was done in Ref.~\cite{fontana2022efficient}, one may treat non-Clifford rotation gates as parameterised rotation gates that have their parameters fixed on at a later stage. A circuit with $t$ fixed $z$-rotation gates and $m$ parameterised $z$-rotation gates may be transformed into a circuit with $m+t$ $z$-rotations for simulation purposes, obtaining a cost function $F(\thv, \boldsymbol{\phi})$. Then the intended cost function is obtained by fixing $\boldsymbol{\phi}$. It follows that any statement on the simulation runtime still applies with the substitution $m \rightarrow m + t$.
However, getting an error bound with non-Clifford gates is more complicated, since we can no longer average over the expanded parameter space owing to the fixed gates. We can still make a weaker probabilistic statement, proven in Appendix~\ref{ap:proof-fixed}.
\begin{theorem}
    \label{thm:uncorr_random}
    Consider a variational circuit consisting of $m$ uncorrelated noisy parameterised rotation gates, and $t$ noisy rotation gates with fixed random angles independently and uniformly distributed. The noise model is that of Theorem~\ref{thm:general_noise}. Then for any $k \ge 1$ and weight cut-off $\ell \in \mathbb{N}$, the simulation error of \textsc{lowesa} (Algorithm~\ref{algo:1}) with modified process modes obeys
    \begin{equation}
        P\left(\Delta(\tilde{f}, \tilde{g}) \ge (1 + k)e^{-2(p'+p'_Z)\ell} \right) \le \frac{1}{k^2}
    \end{equation}
    and the Algorithm runs in time $O(n^2 (m+t) 2^\ell)$.
\end{theorem}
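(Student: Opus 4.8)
The plan is to reduce the probabilistic statement to the deterministic $L^2$ bound of Theorem~\ref{thm:general_noise} by ``lifting'' the fixed gates into genuine parameters, and then to convert that average-case guarantee into a tail bound via a second-moment inequality. First I would treat the $t$ fixed rotations as variational: writing $\boldsymbol{\phi} \in [0,2\pi]^t$ for their angles, the expanded circuit has $m+t$ independently parameterised $z$-rotations carrying the noise model of Theorem~\ref{thm:general_noise}, to which that theorem applies verbatim. Running \textsc{lowesa} with cut-off $\ell$ on the expanded circuit produces a trigonometric approximation $G(\thv,\boldsymbol{\phi})$ of the exact expanded cost $F(\thv,\boldsymbol{\phi})$, and by the construction described above the algorithm's actual output is the restriction $\tilde g(\thv) = G(\thv,\boldsymbol{\phi})$ at the drawn value of $\boldsymbol{\phi}$, while $\tilde f(\thv) = F(\thv,\boldsymbol{\phi})$.

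Next I would set $c := e^{-2(p'+p'_Z)\ell}$ and record that Theorem~\ref{thm:general_noise} controls the error over the \emph{full} $(m+t)$-dimensional space, $\Delta(F,G) \le c$. The key measure-theoretic observation is that, since the integration factorises and $\boldsymbol{\phi}$ is uniform, the $\boldsymbol{\phi}$-average of the squared restricted error equals the squared full-space error:
\begin{equation}
    \mathbb{E}_{\boldsymbol{\phi}}\!\left[\Delta(\tilde f,\tilde g)^2\right]
    = \frac{1}{(2\pi)^{m+t}}\int |F-G|^2\, d\thv\, d\boldsymbol{\phi}
    = \Delta(F,G)^2 \le c^2 .
\end{equation}
Thus the random variable $Y := \Delta(\tilde f,\tilde g)$ has second moment bounded by $c^2$; by Jensen this gives $\mathbb{E}[Y] \le c$ and hence $\mathrm{Var}(Y) \le \mathbb{E}[Y^2] \le c^2$, i.e.\ $\sigma(Y) \le c$.

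Finally I would invoke Chebyshev's inequality. Because $\mathbb{E}[Y] \le c$ and $\sigma(Y) \le c$, the event $\{Y \ge (1+k)c\}$ forces $Y - \mathbb{E}[Y] \ge (1+k)c - c = kc \ge k\,\sigma(Y)$, so it is contained in $\{\,|Y-\mathbb{E}[Y]| \ge k\sigma(Y)\,\}$, whose probability is at most $1/k^2$. This yields exactly
\begin{equation}
    P\!\left(\Delta(\tilde f,\tilde g) \ge (1+k)\,e^{-2(p'+p'_Z)\ell}\right) \le \frac{1}{k^2},
\end{equation}
while the runtime claim is immediate from Theorem~\ref{thm:general_noise} under the substitution $m \mapsto m+t$. (A one-sided Markov bound on $Y^2$ gives the slightly stronger $1/(1+k)^2$, so either route closes the argument.)

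I expect the main obstacle to be conceptual rather than computational: one must carefully justify that the algorithm's fixed-gate output is genuinely the restriction of the expanded-circuit approximation, and that the cut-off $\ell$ is imposed on the \emph{total} weight across all $m+t$ frequencies, so that Theorem~\ref{thm:general_noise} transfers without modification. Once that identification is secured, the only real limitation is that we control solely the second moment of the error, which is precisely why we are confined to a polynomial $1/k^2$ tail rather than exponential concentration in $k$.
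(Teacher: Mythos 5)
Your proposal is correct and follows essentially the same route as the paper's proof: treat the $t$ fixed angles as uniform variables on the expanded space $[0,2\pi]^{m+t}$, apply Theorem~\ref{thm:general_noise} there to get $\mathbb{E}_{\boldsymbol\phi}\,\Delta^2(\tilde f,\tilde g) \le e^{-4(p'+p'_Z)\ell}$, then bound $\mu$ and $\sigma$ by $e^{-2(p'+p'_Z)\ell}$ via the second moment and conclude with Chebyshev, with the runtime following from $m \mapsto m+t$. Your parenthetical observation that a one-sided Markov bound on $\Delta^2$ yields the slightly stronger tail $1/(1+k)^2$ is a minor refinement the paper does not record, but the core argument is identical.
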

Theorem~\ref{thm:uncorr_random} implies that for a typical choice of the $\boldsymbol\phi$ angles the error is still exponentially suppressed in $\ell$. Alternatively, suppose one would like the error to be bounded by $\epsilon$ with probability $ \ge 1 - \delta$. Then one would choose $\ell \approx \frac{1}{2p' + 2p'_Z} (\log \epsilon^{-1} + \log(1+\frac{1}{\sqrt{\delta}}))$, giving a runtime which is only slightly worse than the one from the previous Theorems, for reasonable choices of $\delta$.

\section{The case of correlated parameters}

The main result has been derived assuming that the parameters controlling the rotation gates in the circuits are uncorrelated. One may therefore wonder whether it extends to correlated parameter circuits, which are ubiquitous in quantum machine learning~\cite{biamonte2017quantum} as well as forming the basis of algorithms like the Quantum Approximate Optimisation Algorithm (QAOA)~\cite{farhi2014quantum, hadfield2019quantum} or the Hamiltonian Variational Ansatz (HVA) for chemistry problems~\cite{wecker2015progress, kandala2017hardware}.

However, the argument used in the proof of Theorem~\ref{thm:uncorr_cliff} does not hold since with correlated angles the basis functions are no longer orthogonal over the correlated parameter space. For example, consider the following case where
\begin{align}
    &\Phi_2(\theta) = \cos^2(\theta), \;\;\; \Phi_{-2}(\theta) = \sin^2(\theta) \\
    &\Rightarrow\frac{1}{2\pi} \int \Phi_2(\theta) \Phi_{-2}(\theta) d\theta = \frac{1}{8} \neq 0
\end{align}
Indeed one can devise simple examples of correlated angles system where the bound is seen to fail. 
Consider the following 1-qubit correlated parameter circuit
\begin{equation}
    U_d(\theta) = H (R_z(\theta))^d H
\end{equation}
It is simple to show that when $U_d(\theta)$ is applied to the initial state $|0\>$, then measuring the Pauli Z produces the cost function
\begin{equation}
    f_d(\theta) = \cos(d\theta)
    = \sum_{i=0}^{\lfloor d/2 \rfloor} (-1)^{i} {d \choose 2i}\; \sin^{2i}(\theta) \cos^{d-2i}(\theta)
\end{equation}
whose terms are all of weight $d$. Therefore, any reconstruction with weight $\ell < d$ would trivially return $\tilde g = 0$. Thus, the error will not decay with either $\ell$ or $p$, and could potentially remain large.

This behaviour can be generalized to any circuit composed of $d$ repeated, identical, and independently parameterised layers
\begin{equation}
    \label{eq:circ}
    U(\thv) = \prod_{i=1}^{d} V(\thv_i) = \prod_{i=1}^{d} \left(\prod_{j=1}^h e^{-iH_j\theta_{ij}}\right),
\end{equation}
where each layer is generated by the \emph{same} $h$ Hamiltonians.
It can be observed that both QAOA and HVA ansatzes fit in the prescription.
In this situation, for \textsc{lowesa} to produce a non-zero approximation function $\tilde{g}$, we show (see Appendix~\ref{proof:alternating-layers} for a proof) that the cut-off value $\ell$ has to be greater than the repeated number of layers.
\begin{theorem}
    \label{thm:qaoa}
    Given $U(\boldsymbol{\theta})$ as in Equation~\eqref{eq:circ} and a Pauli operator $P$ that does not commute with at least one of the generators $\{H_j\}$. If the cut-off $\ell < d$ then \textsc{lowesa} produces a trivial approximation $\tilde{g}$ of the noisy expectation value of $\Tr(U(\boldsymbol{\theta})\rho_0 U\hc(\boldsymbol{\theta}) P)$ such that $\tilde{g} =0 $ at any noise level.
\end{theorem}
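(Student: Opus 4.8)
The theorem states that for a circuit built from $d$ repeated identical parameterised layers (each layer generated by the same $h$ Hamiltonians $\{H_j\}$), if LOWESA's cutoff satisfies $\ell < d$ and the measured Pauli $P$ fails to commute with at least one generator, then every Fourier coefficient $d_{\omv}$ with $|\omv| \leq \ell$ vanishes, so $\tilde g \equiv 0$.

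**The warm-up intuition.** The plan is to first internalize the single-qubit example $U_d(\theta) = H(R_z(\theta))^d H$, where the exact cost function $\cos(d\theta)$, when re-expanded in the trigonometric monomial basis $\phi_{\pm 1}$, contains only terms of weight exactly $d$. The key mechanism is that with $d$ identical repeated layers, the parameters are correlated, and the "true" frequency $d$ forces the trigonometric-monomial weight of every nonzero term to be at least $d$. LOWESA, however, treats the $m = dh$ rotation gates as independent when it back-propagates and assigns weights via the process modes $\D_{\pm 1}$, so a nonzero contribution to a term of total angular frequency $d$ requires branching into $\D_{\pm 1}$ at least $d$ times.

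**Proof strategy.** I would argue directly on the back-propagation picture from Algorithm~\ref{algo:1}. Pick any frequency vector $\omv$ with $|\omv| \leq \ell < d$ and show its coefficient $d_{\omv}$ is zero. The number of rotation gates at which the algorithm branches into $\D_{+1}$ or $\D_{-1}$ equals $|\omv|$, and by Equation~\eqref{eq:proc_modes} these are exactly the gates whose back-propagated Pauli is flipped between the $\{X,Y\}$ sector and leaves the $\{I,Z\}$ sector; $\D_0$ instead leaves the Pauli fixed (acting as identity or scaling on $I,Z$) and annihilates $X,Y$. The crucial structural fact is that the $d$ layers are identical: the same generator set $\{H_j\}$ (equivalently, the same Clifford frame between rotation gates, after casting into the form of Equation~\eqref{eq:cliff_var_circ}) reappears in each layer. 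I would set up a per-layer invariant: track, as the Pauli propagates from the measurement side through layer $d$, then $d-1$, and so on, whether the Pauli is "active" (anticommutes with the relevant generator, i.e., lies in the $\{X,Y\}$ sector on the pivotal qubit) or "inert". Because $P$ anticommutes with at least one generator, the Pauli enters active; I would show that to return to a value that survives $\langle\!\langle 0|P\rangle\!\rangle \neq 0$ at the end (so that $d_{\omv}\neq 0$), each of the $d$ layers must contribute at least one branching, forcing $|\omv| \geq d$.

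**Key steps in order, and the main obstacle.** (i) Cast $U(\thv)$ into the Clifford/$R_z$ normal form of Equation~\eqref{eq:cliff_var_circ} so the LOWESA weight counting applies; (ii) establish that $\D_0$ preserves the $\{I,Z\}$-vs-$\{X,Y\}$ sector of the single-qubit Pauli at each pivot while $\D_{\pm 1}$ is the only mode that can move a Pauli out of or into the anticommuting sector; (iii) use the identical-layer structure to define a monovariant — e.g., a "residual frequency" or a sector parity — that must be discharged by at least one weight-contributing branching per layer; (iv) invoke the hypothesis that $P$ anticommutes with some $H_j$ to guarantee the invariant starts in the active state and that a fully inert path would make the final overlap with $|0\rangle\!\rangle$ vanish (since a surviving path must realize the genuine frequency-$d$ oscillation). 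The main obstacle I anticipate is step (iii): with $h > 1$ generators per layer and arbitrary Cliffords interleaving them, the back-propagated Pauli can wander across qubits, so the "one branching per layer" bound is not immediate from a single-qubit argument. I expect to handle this by an inductive/telescoping argument over the $d$ layers — showing that the map induced on the relevant Pauli sector by one noiseless identical layer, restricted to $\D_0$ choices everywhere (no branching), cannot connect the anticommuting initial Pauli to a Pauli that overlaps $|0\rangle\!\rangle$, because the repeated layer structure makes any such zero-weight path equivalent to evaluating the genuine cost function at weight $0$, which is forced to vanish by the non-commutation of $P$ with a generator. Closing this telescoping step cleanly, rather than just for the $h=1$ single-qubit model, is where the real work lies.
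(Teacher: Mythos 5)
Your overall plan coincides with the paper's: back-propagate $P$, identify $|\omv|$ with the number of $\D_{\pm 1}$ branchings, and use the repeated-layer structure to force at least one branching per layer, whence $|\omv|\ge d>\ell$ for every surviving mode and $\tilde g=0$. But the step you yourself flag as ``where the real work lies'' --- your step (iii) --- is exactly the content of the paper's Lemma~\ref{lemma:2}, and your proposal does not supply it; as written there is a genuine gap. The paper closes it with an operator-level invariant rather than the qubit-local ``sector on the pivotal qubit'' invariant you sketch: if $[P,H_j]\neq 0$, then $e^{-iH_j\theta}Pe^{iH_j\theta}=\sum_k c_k P_k$ is a genuinely nontrivial expansion, and \emph{every} Pauli $P_k$ appearing with $c_k\neq 0$ again satisfies $[P_k,H_j]\neq 0$. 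So the ``active'' flag is simply ``fails to commute with some generator $H_j$'' --- a frame-independent commutation relation that travels with the operator, and that is automatically renewed after every nontrivial action (with the generator label updated to whichever $H_{j'}$ just acted nontrivially). If an entire layer were traversed with only trivial ($\D_0$) choices, the propagated Pauli would be unchanged; but it fails to commute with $H_j$, and $e^{-iH_j\theta_{ij}}$ occurs in that very layer because all layers share the same generator set, so that unitary must act nontrivially there --- a contradiction. This is precisely what defuses your worry about the Pauli ``wandering across qubits'' under $h>1$ generators and interleaved Cliffords: no telescoping or induction over the layer map is needed once the invariant is attached to the commutation relation rather than to a qubit. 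Your proposed fallback --- arguing that a zero-weight path is ``equivalent to evaluating the genuine cost function at weight $0$'' --- is not a proof and would be circular as stated, since it presupposes the weight-$\ge d$ structure of the exact cost function in the general $h>1$ case, which is the corollary the paper derives \emph{from} the lemma, not an input to it.

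A secondary inaccuracy: in your step (iv) you locate the failure of a fully inert path in the final overlap with the initial state. That is not the mechanism. A path dies \emph{locally}: by Equation~\eqref{eq:proc_modes}, $\mathbf{D}_0$ annihilates the $X$ and $Y$ components, so when the back-propagated Pauli anticommutes with the generator at a rotation slot, choosing $\D_0$ there kills the path outright, independently of anything downstream. With the invariant above in hand, this local annihilation is all that Algorithm~\ref{algo:1} needs: every mode with $|\omv|\le\ell<d$ is annihilated, and since Pauli noise only multiplies each process mode by a scalar, the conclusion $\tilde g=0$ holds at any noise level --- your appeal to ``realizing the genuine frequency-$d$ oscillation'' can be dropped entirely.
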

This result implies that the complexity requirements of \textsc{lowesa} will scale exponentially $\Omega(2^d)$ with the number of layers.
Correlating the angles further, for example by setting $\thv_1 = \thv_2 = \cdots = \thv_p$ does not affect the validity of the result. Improvements to the runtime may be possible if the number of valid paths can be reduced, for instance by leveraging symmetries in the circuit.

For now, however, this leaves room for a quantum advantage in QAOA and HVA, as well as in simulating time evolution on noisy quantum devices, as such tasks commonly involve repeated gate patterns.

\section{Discussion}

In this work we introduce \textsc{lowesa}, an algorithm to approximately classically simulate the cost function of variational quantum algorithms. 
Crucially, the algorithm is constructive, in that it outputs a \emph{function} of circuit parameters that approximates the entire noisy landscape rather the observable's noisy expectation value at some fixed parameters. We show that for circuits with \emph{independently} parameterised non-Clifford gates, our procedure gives a polynomial-time algorithm in both the number of qubits and depth, with an upper bound on the average error that decays exponentially with the physical error rate and a controllable cut-off parameter.
The implication is that generic variational quantum algorithms with independent parameters and under constant physical gate error rate can be efficiently simulated classically.

We emphasize that the approximation error measure we employ is an average over the entire parameter space. The claim of efficient classical simulatability for estimating expectation values in the presence of noise should be understood for a \emph{typical circuit} within a family of circuits with fixed structure (i.e. fixed Clifford unitaries on an arbitrary topology interleaved with arbitrary non-Clifford $z$-rotations). 
For the case of a PQC with uncorrelated parameters, this corresponds to a typical parameter constellation. 
On the other hand, when the circuit family contains fixed, non-Clifford gates, then our results hold only probabilistically (Theorem~\ref{thm:uncorr_random}). Thus, we do not claim the ability to efficiently simulate all noisy Clifford+$T$ circuits.
At the same time, the aforementioned result also indicates that, for given circuit parameters, the probability to get an approximation error larger than the target accuracy of our algorithm also decreases exponentially with the cutoff parameter $\ell$. While the cutoff is tunable, the algorithm's computational cost scales exponentially in $\ell$ in the worst case. 
This behaviour is similar to what was observed in Ref.~\cite{gao2018efficient}, which shows classical simulability of generic (random) noisy circuits except a zero-measure subset of (fixed, structured) circuits. 

Our work can also be placed within a broader range of research~\cite{gao2018efficient, aharonov2022polynomial, stilck2021limitations, de2023limitations, zhou2020limits, aharonov1996limitations}, that aims to establish the extent to which noise in quantum computations hinders any potential quantum advantage. The works in Refs.~\cite{gao2018efficient, aharonov2022polynomial}, which inspired our algorithm, are specific to the task of simulating random circuit sampling and thus rely on different assumptions on circuit structure and output state. 
Recent frameworks~\cite{chen2022complexity} show, up to oracular access, that specific circuit structures can exhibit a noise-robust quantum advantage. Our results are consistent with this because of the intrinsically probabilistic nature of our claims. 
However, it has also been shown that finite noise can introduce an exponential separation between an algorithm for learning quantum states running on a fault-tolerant quantum computer vs a NISQ device~\cite{huang2020predicting}. Similarly, our results imply that, in the presence of sufficiently large levels of noise, a generic, wide range of VQAs become classically simulatable. 
This type of conclusion has been reached in Ref.~\cite{stilck2021limitations}, where comparisons with classical algorithms lead to trade-offs between physical error rates and depth limitations on variational Hamiltonian optimisation algorithms. For tensor network approaches \cite{zhou2020limits}, truncation error accuracy is impacted by connectivity and has only been empirically related to noise. In contrast, our approach gives a constructive classical algorithm to recover the entire cost function, with provable bounds on accuracy (for the circuit families considered) and does not assume a particular problem or architecture topology. We note that our results in their present form do not apply to variational algorithms that sample from the output state, such as QAOA or quantum generative modeling~\cite{perdomo2018opportunities, benedetti2019qcbm, amin2018qbm, dallaire2018qgan}. These may be avenues for future exploration.
 
Besides the implication for the complexity of noisy VQAs, \textsc{lowesa} may have a place as a useful simulation algorithm for the NISQ era. While for fixed physical error rate per gate our algorithm scales polynomially in the number of qubits and depth, the complexity grows exponentially with decreasing error rates, in the worst case. However, in practice, it may possible to have better scaling for realistic circuits, for instance if the cost function is dominated by low-weight terms. Our experiments (Figure~\ref{fig:accuracy}) provide some empirical evidence that this is the case, supplementing similar findings in Ref.~\cite{fontana2022efficient}.
Finally, a recent article details a Fourier-based simulation algorithm with many similarities to the one presented here~\cite{nemkov2023fourier}. They employ an analogous Pauli back-propagation scheme with a path length cut-off, with the crucial difference that they consider a noiseless scenario, where the accuracy of the output is not guaranteed. However, the paper is an excellent alternative presentation of the underlying concept, and suggests that such low-weight algorithms may have a place in simulating exact variational quantum circuits.
Future work may thus focus on establishing tighter bounds on the accuracy of low-weight simulation methods for variational quantum algorithms of interest, including circuits with correlated parameters such as QAOA and HVA that are currently outside the reach of our results.

Classical simulation algorithms such as that presented here can not only serve as benchmarking tools for NISQ devices at larger scales but most importantly, they help establish a threshold where quantum computers, given sufficiently low physical error rates, produce results that are no longer reproducible with classical computing resources. From this perspective, they are essential tools to determine the full picture of resource requirements for practical quantum applications.

\section{Acknowledgements}
We thank David Amaro, Pablo  Andres-Martinez and Dan Mills for feedback and suggestions on an early version of this manuscript. 
We acknowledge support from Innovate UK Project No: 10001712.
“Noise Analysis and Mitigation for Scalable Quantum Computation”. 
EF and IR acknowledge the support of the UK government department for Business, Energy and Industrial Strategy through the UK national quantum technologies programme. 
EF acknowledges the support of an industrial CASE (iCASE) studentship, funded by the UK Engineering and Physical Sciences Research Council (grant EP/T517665/1), in collaboration with the University of Strathclyde, the National Physical Laboratory, and Quantinuum.

\bibliographystyle{unsrt}
\bibliography{main}

\appendix

\section{Proof of Theorem \ref{thm:uncorr_cliff}}

\begin{proof}
\label{ap:proof-main}
Using Equation~\eqref{eq:cost} we can rewrite Equation~\eqref{eq:err} as
\begin{equation}
	\Delta^2(\tilde{f}, \tilde{g}) = \frac{1}{|\Theta|} \int_\Theta \Big| \sum_{|\omv|>\ell} \Phi_{\omv}(\thv) d_{\omv} \Big|^2 d\thv
\end{equation}
Then using the fact that the trigonometric monomials $\Phi_{\omv}$ are orthogonal
\begin{equation}
	\frac{1}{|\Theta|} \int_\Theta \Phi_{\omv}(\thv) \Phi_{\omv'}(\thv) d\thv = 2^{-|\omv|} \delta_{\omv\omv'}
\end{equation}
and thus form a basis, we derive the appropriate Parseval's theorem
\begin{equation}
    \Delta^2(\tilde{f}, \tilde{g}) = \sum_{|\omv| > \ell} 2^{-|\omv|} |d_{\omv}|^2
    \label{eq:parseval}
\end{equation}
Now consider the Fourier coefficients $d_{\omv}$. If we define the zero-noise coefficients $d^0_{\omv}$ by setting $p_{X/Y/Z} = 0$, we see that by definition $|d_{\omv}| = Q_{\omv} |d^0_{\omv}|$, where $0 < Q_{\omv} \le q^{|\omv|}$, with $q := \max\{q_X, q_Y\}$. Hence we can write
\begin{align}
	\Delta^2(\tilde{f}, \tilde{g}) &= \sum_{|\omv| > \ell} Q^2_{\omv} 2^{-|\omv|} |d^0_{\omv}|^2 \\
    &\le q^{2(\ell+1)} \sum_{|\omv| > \ell} 2^{-|\omv|} |d^0_{\omv}|^2 \\
    &\le q^{2(\ell+1)} \sum_{\omv} 2^{-|\omv|} |d^0_{\omv}|^2
\end{align}
Now we recognise that by Parseval's theorem the summation relates to the noise-less cost function $f(\thv)$ as
\begin{equation}
    \sum_{\omv} 2^{-|\omv|} |d^0_{\omv}|^2 = \frac{1}{|\Theta|} \int_\Theta |f(\thv)|^2 d\thv \le 1.
\end{equation}
This implies $\Delta(\tilde{f}, \tilde{g}) \leq q^{l+1}$, which gives a non-trivial bound whenever $q\leq 1$.
To simplify the expression further, we can define $p := \min \{p_X, p_Y\}$, giving $q = 1 - 2(p + p_Z)$ and $q^{\ell} \le e^{-2 (p + p_Z)\ell}$. We see that for the bound to hold we must have $p > 0$ or $p_Z > 0$, or both. 

Finally, we compute the runtime of the algorithm to determine the approximation $\tilde{g}$. As outlined in the main text, we produce a binary tree-like data structure to keep track of the back-propagation of the target measurement Pauli operator $P$ through the noisy circuit. This drastically improves the performance, as not all weight vectors produce valid paths $\mathcal{D}_{\omv} := \left(\bigcirc_{i}\; \mathcal{U}_i \circ \mathcal{D}_{\omega_i} \right)  \circ \mathcal{U}_0 $ that are non-zero. Thus, we only keep track of the paths leading to non-zero $d_{\omv}$. 

We start with the target Pauli measurement $P$ and we have $m$ layers to propagate it through, beginning with $ {\bf{D}}_{\omega_m}^T {\bf{U}}^{T}_m |P\rrangle$. As $U_m$ is an $n$-qubit Clifford, $P$ can be updated to another Pauli operator $U_m\hc P U_m$ and this takes generically at most $O(n^2)$. Now $\D_{\omega_m}$ acts on the qubit $q_m$, so if the propagated Pauli operator on $q_m$ is either $Z$ or $I$ then it forces $\D_{\omega_m} = \D_0$, otherwise ${\bf{D}}_{\omega_m}^T {\bf{U}}^{T}_m |P\rrangle = 0$. Similarly if the propagated Pauli operator on $q_m$ is $X$ or $Y$ then there are two possible choices $\D_{\pm 1}$ that do not give a zero process mode. In this case, we have two possible branches and we determine the propagated Pauli operators for each. Then each of these will act as input for the next layer where we repeat the same process. The update of the Pauli frame through each $\D_{\omega_i}$ takes $O(1)$. Note that the Pauli frame is deterministically updated before any branches occur. 

Therefore we produce a tree graph where nodes correspond to those propagated Pauli operators for which the next step requires two possibilities (i.e apply $D_{-1}$ on one branch and $D_{+1}$ on the other). We also assign edges with values that track the number of $D_0$'s that occurred between two consecutive nodes. As the weight of each paths is at most $|\omv|\leq \ell$, then there are at most $\ell+1$ levels in the binary tree. Some of the branches will terminate sooner, but the maximal number of nodes in level $i$ is $2^i$ for $i\in [0,\ell]$. Note that updating the Pauli frame operator between any two consecutive nodes take $O(n^2\,k)$ where $k$ is the number of $D_0's$ applied in between. Therefore updating the layer $i+1$ given all the Pauli operators in layer $i$ takes $O(n^2 (k_1 + ... + k_{2^{i}}))$. However, since the number of $D_0's$ applied within any branch satisfies $k\leq m$, then updating layer $i+1$ given $i$ takes at most $O(n^2 2^{i} m)$.
Putting all together it means that propagating $P$ through all valid paths takes at most $\sum_{i=0}^{\ell-1} O(n^2 2^i m) = O(n^2 2^\ell m)$.

Note that the scaling with $m$ provides a coarse upper bound. If it is attained then that means there's no branching in that specific tree and the complexity will in that case be independent of the cut-off too. 
The scaling with the number of qubits $n$ depends on the details of the Clifford part of the circuit. In the worst case, when the Clifford layers are generic Cliffords, they can be represented as $2n\times 2n$ symplectic matrices \cite{rengaswamy2018synthesis}, and therefore their application takes na\"ively $O(n^2)$ time. Otherwise, if the Clifford layers consist of gates of maximum locality $k$ and maximum depth $d$, the runtime is $O(k^2nd)$.
In both cases the runtime is polynomial in $n$, as claimed.
\end{proof}

\section{Proof of Theorem \ref{thm:general_noise}}
\label{ap:proof-extended}
\begin{proof}
The main difference from the proof of Theorem~\ref{thm:uncorr_cliff} is that the $n$-qubit unitaries $\C_i$ are noisy and are replaced by $\M_i\circ \C_i$. However, since $\M_i$ are Pauli channels, then its adjoint acts on any Pauli operator as $\mathbf{M_i}^T |P\rrangle \propto |P\rrangle$, where the proportionality factor is determined by the eigenvalues of $\M_i$. These are assumed to be accessible, e.g from previous benchmarking experiments. Therefore, the total number of valid, non-zero process modes is also $2^\ell$ and there are $m$ Pauli channels $\M_i$ so computing the proportionality factor takes at most $O(m\,2^\ell)$, which means it does not affect the overall complexity in determining the Fourier coefficients $d'_{\omv} = \sqrt{2^n} \llangle P | \mathbf{D}'_{\omv} | 0 \rrangle$ with $|\omv|\leq \ell$ which can be computed, as previously, in $O(n^2m 2^\ell)$.

It remains to show that the average approximation error $\Delta(\tilde{f},\tilde{g})$ still decays exponentially with the cut-off parameter $l$. 
Like before, the noisy cost function is given by
\begin{equation}
	\tilde{f}(\thv) = \sum_{\omv} d'_{\omv} \Phi_{\omv}(\thv)
\end{equation}
where $d'_{\omv} = Q'(\omv) d^0_{\omv}$, $Q'(\omv) \le q^{|\omv|}$ if the noise $\N$ on the parameterised gates $\R^{(q_i)}(\theta_i)$ is a fixed, time-independent Pauli channel with eigenvalues $(q_X, q_Y, q_Z)$ and $q := \max\{q_X, q_Y\}$. More generally however, if  $\N$ carries a time-dependency with possibly different eigenvalues $(q_X^{i}, q_Y^{i}, q_{Z}^{i})$ for each of the parameterised gates $\R^{(q_i)}(\theta_i)$, then we have $Q'(\omv) \le \prod_{i} (q^{i})^{|\omega_i|}$, where $q^{i} := \max\{q_X^{i}, q_Y^{i}\}$. Note that in this situation, the process modes for each site $\D'_{\omega_i}$ will have the same form as in the previous analysis but with different parameters that depend on the location.
 
Finally, orthogonality of the trigonometric functions $\Phi_{\omv}(\boldsymbol{\theta})$ ensures we get
\begin{align}
	\Delta^2(\tilde{f}, \tilde{g}) 	&= \sum_{|\omv| > \ell} Q'^2(\omv) 2^{-|\omv|} |d^0_{\omv}|^2 \\
    &\le (\underset{|\omv| > \ell}{\max}\, Q'(\omv))^2 \sum_{\omv} 2^{-|\omv|} |d^0_{\omv}|^2
\end{align}
The term within the brackets is the largest $(\ell+1)$ product of the $q^i$'s. This can be given the trivial upper bound $1 - 2(p' + p'_Z)$ by defining $p' := \min_i\{p^i_X, p^i_Y\}$ and $p'_Z := \min_i\{p^i_Z\}$. In this case we must have $p' > 0$ or $p'_Z > 0$, or both.
Finally, the sum $\sum_{\omv} 2^{-|\omv|} |d^0_{\omv}|^2$ can be bounded by 1 as explained in the proof of Theorem~\ref{thm:uncorr_cliff}. 
\end{proof}

\section{Proof of Theorem \ref{thm:uncorr_random}}
\label{ap:proof-fixed}
\begin{proof}
    We can use the trick of considering the random $t$ angles to be variables on a space $\Phi = [0, 2\pi]^t$, and replicate the proofs of Theorems~\ref{thm:uncorr_cliff} and \ref{thm:general_noise} on the expanded parameter space $\Theta \otimes \Phi = [0, 2\pi]^{m+t}$. This gives the bound
    \begin{align}
        \mathbb{E}_{\boldsymbol\phi} \Delta^2(\tilde f, \tilde g)
        &= \frac{1}{|\Theta||\Phi|}\int_{\Phi}\int_{\Theta}|\tilde{f}(\thv, \boldsymbol{\phi}) - \tilde{g}(\thv, \boldsymbol{\phi})|^2 d\theta d\phi \\
        &        \le e^{-4(p' + p'_Z)\ell}
    \end{align}
    Now from basic probability theory,
    \begin{align}
        \mu^2 &:= \left(\mathbb{E}_{\boldsymbol\phi} \Delta(\tilde f, \tilde g) \right)^2 \le \mathbb{E}_{\boldsymbol\phi} \Delta^2(\tilde f, \tilde g), \\
        \sigma^2 &:= \text{Var}_{\boldsymbol\phi} \Delta(\tilde f, \tilde g) \le \mathbb{E}_{\boldsymbol\phi} \Delta^2(\tilde f, \tilde g)
    \end{align}
    Therefore we can use Chebyshev's inequality to show:
    \begin{align}
    \frac{1}{k^2} &\ge P\left(|\Delta(\tilde f, \tilde g) - \mu| \ge k\sigma\right) \\
        &\ge P\left(\Delta(\tilde f, \tilde g) \ge \mu + k\sigma\right) \\
        &\ge P\left(\Delta(\tilde f, \tilde g) \ge (1 + k)e^{-2(p' + p'_Z)\ell}\right)
    \end{align}
    for any $k > 1$.
    The running time is bounded by $O(n^2(m+t)2^l)$ since the number of rotation gates is now $m+t$.
\end{proof}

\section{Proof of Theorem \ref{thm:qaoa}}
\label{proof:alternating-layers}
We state the following Lemma:
\begin{lemma}
    \label{lemma:2}
    Consider a circuit in the form of Equation~\eqref{eq:circ} and its expansion in process modes. Then given a Pauli operator $P$, either i) it commutes with all generators $\{H_j\}$, or ii) the process modes that do not annihilate $P$ all have weight $|\omv| \ge d$.
\end{lemma}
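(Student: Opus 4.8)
The plan is to track how the total weight $|\omv|$ of a surviving process mode accumulates as the measurement Pauli $P$ is back-propagated through the $d$ identical layers of Equation~\eqref{eq:circ}, and to show that each layer is forced to contribute at least one unit of weight unless $P$ lies in the commutant of the generators. Recall from the decomposition \eqref{eq:proc_modes} that, viewing each $e^{-iH_j\theta_{ij}}$ as a (Clifford-conjugated) $z$-rotation, a single gate contributes a factor $\D_0$ of weight $0$ exactly when the propagated Pauli commutes with $H_j$, and otherwise branches into $\D_{\pm1}$, each of weight $1$. Hence, along any non-annihilating path, the weight accrued inside a layer equals the number of gates at which the current Pauli anticommutes with the corresponding generator. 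Writing $\mathcal{Z}$ for the set of Pauli operators commuting with every generator $H_j$, case (i) of the Lemma is precisely $P\in\mathcal{Z}$, so it remains to treat $P\notin\mathcal{Z}$ and show every surviving path has $|\omv|\ge d$.

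First I would record the easy per-layer bound: if the Pauli entering a layer is not in $\mathcal{Z}$, that layer contributes weight at least $1$. Indeed, if a layer contributed weight $0$ then only $\D_0$ modes act, the Pauli is left unchanged (up to scalar) throughout the layer, and it therefore commutes with all $h$ generators; but this says the incoming Pauli lies in $\mathcal{Z}$, a contradiction. So weight $0$ in a layer forces the incoming Pauli into the commutant.

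The crux is a commutant-invariance claim: if the Pauli entering a layer is not in $\mathcal{Z}$, then neither is any Pauli produced on a non-annihilating branch leaving that layer. Let $V$ denote the layer unitary and let $Q_0\in\mathcal{Z}$. Since $Q_0$ commutes with every $H_j$, it commutes with $V$, so $V Q_0 V^\dagger = Q_0$. The component of $V^\dagger P V$ along $Q_0$ is then $\tfrac{1}{2^n}\Tr(Q_0 V^\dagger P V)=\tfrac{1}{2^n}\Tr(V Q_0 V^\dagger P)=\tfrac{1}{2^n}\Tr(Q_0 P)$, which vanishes by Pauli orthogonality whenever $Q_0\neq P$ — and $Q_0\neq P$ is guaranteed since $Q_0\in\mathcal{Z}$ while $P\notin\mathcal{Z}$. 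Thus $V^\dagger P V$ has no support on the commutant, so every Pauli appearing on a surviving branch lies outside $\mathcal{Z}$. Because the Pauli noise channels only rescale coefficients and never introduce new Paulis, the same conclusion holds for the noisy process modes. This step is where I expect the real work to sit: the per-layer weight bound is almost definitional, whereas closing the induction requires ruling out that propagation accidentally returns the Pauli into the commutant.

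Finally I would assemble the two facts by induction over the layers, propagating from the last layer down. The Pauli entering layer $d$ is $P\notin\mathcal{Z}$; by the invariance claim it stays outside $\mathcal{Z}$ as it enters each successive layer, and by the per-layer bound each of the $d$ layers contributes weight at least $1$. Summing, any non-annihilating process mode satisfies $|\omv|\ge d$, which is case (ii). The only assumption used is that the generators $H_j$ are Pauli operators, so that the process-mode picture and the trace argument apply; the multi-qubit Pauli noise of Theorem~\ref{thm:general_noise} is harmless since it is diagonal in the Pauli basis.
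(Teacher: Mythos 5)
Your proof is correct, and the key step takes a genuinely different route from the paper's. The paper argues locally and per-branch: it first notes that if a Pauli fails to commute with $H$, then every Pauli in the support of $e^{-iH\theta}Pe^{iH\theta}$ also fails to commute with $H$, so non-commutation with the specific generator that was just hit persists along every surviving branch; since the same generators recur in each layer of Equation~\eqref{eq:circ}, every layer is forced to contain at least one nontrivial action (one $\D_{\pm 1}$ branching), giving $|\omv|\ge d$. You instead prove a global invariance: writing $\mathcal{Z}$ for the commutant of $\{H_j\}$, cyclicity of the trace together with $VQ_0V^\dagger = Q_0$ for $Q_0\in\mathcal{Z}$ shows that the full layer conjugation maps the orthogonal complement of $\mathcal{Z}$ into itself. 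This is cleaner and basis-free, and your per-layer bound (weight $0$ forces the incoming Pauli into $\mathcal{Z}$, since all-$\D_0$ paths leave the Pauli unchanged up to scalar) matches the paper's local dichotomy. One step you should make explicit, however: ``$V^\dagger P V$ has no support on $\mathcal{Z}$'' does not by itself give ``no surviving branch exits in $\mathcal{Z}$'', because distinct branches landing on the same $Q_0\in\mathcal{Z}$ could in principle cancel in the sum. The fix is one line: your trace identity holds for \emph{all} values of the layer parameters, the branch coefficients have the form $d_{\omv}\Phi_{\omv}(\thv)$ with distinct trigonometric monomials $\Phi_{\omv}$, and these monomials are linearly independent (indeed orthogonal, as used in the paper's Parseval argument), so every $d_{\omv}$ whose exit Pauli lies in $\mathcal{Z}$ must vanish individually. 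The paper's per-branch anticommutation argument sidesteps this cancellation issue entirely, which is what its local formulation buys; your version buys a statement about the whole layer at once, and it extends without change to layers that contain each generator but need not be identically ordered or repeated, a mild generalisation of the paper's hypothesis.
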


\begin{proof}    
First note that, given a Pauli operator $P_i$ and a Hermitian $H$ with $[P_i, H] \neq 0$, then the corresponding unitary acts nontrivially on $P$, more precisely
\begin{equation}
    e^{-iH\theta} P_i e^{iH\theta} = \sum_j c_j P_j
\end{equation}
with $c_j \neq 0$ for at least one $j \neq i$. In that case it is easy to show that $[P_j, H] \neq 0$ for all such a $P_j$.

Therefore we see that if at layer $i$ the unitary $e^{-iH_j\theta_{ij}}$ acts nontrivially on a Pauli $P$, even if all its products commute with all subsequent unitaries, then they cannot also commute with the next unitary generated by $H_j$. This means that either $P$ commutes with all generators, or at least one unitary must act nontrivially per layer. In the latter case, it follows that the weight of any process mode that does not annihilate $P$ must necessarily be greater than or equal to the total number of layers $d$.
\end{proof}
The theorem follows simply by observing that if $\ell < d$, 
Algorithm~\ref{algo:1} must return $\tilde{g}(\thv) = 0$ since all process modes annihilate $P$. The result holds equally if noise is present as noise simply adds up to a constant for each mode in the expansion.

The following corollary also follows from Lemma~\ref{lemma:2}, which we report as it may be of independent interest:
\begin{corollary}
Consider a unitary $U$ in the form of Equation~\eqref{eq:circ}. Then for any Hermitian $O$, the resulting cost function $f(\thv)$ consists only of terms of weight $\ge d$ and constant terms.
\end{corollary}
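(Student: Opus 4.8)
The plan is to obtain this corollary as an almost immediate consequence of Lemma~\ref{lemma:2} by decomposing the observable into Pauli strings and invoking linearity. Since $O$ is Hermitian I would first expand $O = \sum_k c_k P_k$ with real coefficients $c_k$ and $P_k \in \mathbb{P}^{\otimes n}$, so that the trace splits as $f(\thv) = \sum_k c_k f_k(\thv)$ with $f_k(\thv) = \Tr(U(\thv)\rho_0 U\hc(\thv) P_k)$. Because the weight of a trigonometric mode is a property of the mode itself, it suffices to show that each single-Pauli cost function $f_k$ is supported only on weight-$0$ and weight-$\ge d$ modes, and the same support then holds for the sum.

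Next I would apply Lemma~\ref{lemma:2} to each $P_k$, which sorts the Paulis into exactly two types. If $P_k$ commutes with every generator $H_j$, then each factor $e^{-iH_j\theta_{ij}}$ leaves $P_k$ invariant under conjugation, so $U\hc(\thv)P_k U(\thv) = P_k$ and $f_k(\thv) = \Tr(\rho_0 P_k)$ is a pure constant (a single weight-$0$ mode). If instead $P_k$ fails to commute with at least one generator, Lemma~\ref{lemma:2} forces every non-annihilating process mode in the expansion of $f_k$ to carry weight $|\omv| \ge d$; in particular the all-$\D_0$ path, the only possible source of a weight-$0$ contribution, annihilates $P_k$, so $f_k$ has no constant term and no modes of weight $1, \ldots, d-1$. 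Summing over $k$ then yields a cost function whose Fourier support lies in $\{0\} \cup \{|\omv| \ge d\}$, which is the claim.

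I do not anticipate a serious obstacle, as the result is essentially a restatement of Lemma~\ref{lemma:2}; the one point to handle carefully is the clean dichotomy between the two Pauli classes, namely verifying that a non-commuting $P_k$ contributes no weight-$0$ mode, so that ``constant'' and ``weight $\ge d$'' stay disjoint categories, and that a commuting $P_k$ contributes no $\thv$-dependence at all. Both follow from the commutation bookkeeping already established in the proof of the Lemma. Finally, although the statement is phrased for the noiseless unitary, I would remark that the identical weight argument survives under Pauli noise, since, as in the proof of Theorem~\ref{thm:general_noise}, such noise merely rescales each mode's coefficient without altering its weight.
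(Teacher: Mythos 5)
Your proposal is correct and follows essentially the same route as the paper, which proves the corollary in one line by expanding $O$ in Pauli operators and applying Lemma~\ref{lemma:2}; your added bookkeeping (commuting Paulis give pure constants, non-commuting ones have no modes of weight $<d$ including weight $0$) is just a careful spelling-out of what the paper leaves implicit.
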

This follows simply from expanding $O$ in Pauli operators and applying the Lemma.

\section{Fourier decomposition under dephasing channel}
\label{sec:example}
In this example we illustrate the contraction of Fourier coefficients with the Hamming weight of the frequency vector under a dephasing noise model. While we had to use the trigonometric basis instead of complex exponentials in order to obtain the efficient classical simulation, this example gives a cleaner intuition behind the low-weight approximation.  

Here, we use a simplified time-independent noise model 
\begin{equation}
	\tilde{\U}_{\thv} = \left( \bigcirc_{i}\; \C_i \circ \N_{phase}(p) \circ \mathcal{R}^{(q_i)}_z(\theta_i) \right) \circ \C_0 \label{eq:tildeU1},
\end{equation}
where each parameterised single qubit rotation is affected by a dephasing channel.

In Ref.~\cite{fontana2022spectral} it was shown that a rotation gate affected by dephasing noise can be decomposed as a linear combination of channels, each carrying an oscillatory term in $\theta$ and a noise term in $p$. This was termed the \textit{process mode decomposition} as it decomposes a channel (aka a quantum process) into Fourier modes \cite{cirstoiu2017global}. For the specific case of the $z$-rotation channel affected by dephasing, the decomposition is
\begin{equation}
	\label{eq:decomp}
	\N_{phase}(p) \circ \mathcal{R}_z(\theta) = \mathcal{C}_0 + (1 - 2p) e^{i\theta} \mathcal{C}_{1} + (1 - 2p) e^{-i\theta} \mathcal{C}_{-1}
\end{equation}
where the process modes $\mathcal{C}_i$  are linear combinations of Clifford unitary channels. Note that such decompositions are generally not unique, and indeed we will consider a different decomposition shortly.

Therefore, it follows that under phase noise, the noisy PQC defined in Equation~\eqref{eq:tildeU1} can be given a process mode decomposition, where each mode is a compositions of Clifford channels and single-qubit modes $\mathcal{C}_i$ labelled by a frequency vector $\omv \in [0, \pm 1]^m$:
\begin{equation} \label{eq:modes}
	\mathcal{C}_{\omv} := \left(\bigcirc_{i}\; \mathcal{U}_i \circ \mathcal{C}_{\omega_i} \right)  \circ \mathcal{U}_0
\end{equation}
The decomposition is
\begin{equation}
	\tilde{\U}_{\thv} = \sum_{\omv \in \{0, \pm1\}^m} (1-2p)^{|\omv|} e^{i\omv\cdot\thv} \mathcal{C}_{\omv}
\end{equation}
Each mode is weighted by a noise term $(1-2p)^{|\omv|}$, where $|\omv| = \sum_i |\omega_i|$ is the Hamming weight of the mode.
Now by linearity, the cost function can be written as
\begin{equation} \label{eq:phase}
	\tilde{f}(\thv) = \sum_{\omv \in \{0, \pm1\}^m} (1-2p)^{|\omv|} c_{\omv} e^{i\omv\cdot\thv}
\end{equation}
where $c_{\omv} = \tr(P\mathcal{C}_{\omv}(\rho_0))$ are the noiseless Fourier coefficients of the decomposition. One can see that with phase noise ($p > 0$) the Fourier coefficients are contracted as found in Ref.~\cite{fontana2022spectral}.

\end{document}